\newtheorem{theorem}{Theorem}
\newtheorem{lem}{Lemma}
\newtheorem{cor}{Corollary}
\newtheorem{example}{Example}
\newtheorem{remark}{Remark}
\begin{document}
\begin{center}{\bf \Large
Four classes of optimal $p$-ary cyclic codes}
\end{center}
\begin{center}

{\small Jinmei Fan, Jingyao Feng, Yuhan Men, Yanhai Zhang \footnote{ Corresponding author.} \footnote{This paper has been submitted to Designs, Codes and Cryptography on 21 January 2025, Submission ID 6d905dd2-eb91-4c41-a0d8-ca66472e7960.}\footnote{\small J. Fan's research was supported by the National Natural Science Foundation of China under Grant 12061027.
Y. Zhang's  research was supported by  the
Doctoral Research Foundation of Guilin University of Technology under Grant GUTQDJJ2018033 and by the Opening Fund of Key Laboratory of Cognitive Radio and Information Processing, Ministry of Education  under Grant CRKL210206.}
\footnote{\textit{E-mail addresses}: 2007027@glut.edu.cn (J. Fan), 1244574837@qq.com (J. Feng), 3231929859@qq.com (Y. Men), zhang.yanhai@foxmail.com
 (Y. Zhang).
}\\
\medskip School of Mathematics and Statistics, Guilin University of Technology, Guilin 541004, China \\}

\end{center}
\begin{quote}
{\small {\bf Abstract:}} Let $p\geq5$ be an odd prime and $m$ be a positive integer. Little progress on the study of optimal $p$-ary cyclic codes with parameters $[p^m-1,p^m-2m-2,4]$ has been made. In this paper, by  weakening the necessary and sufficient conditions on  cyclic codes to have codewords of Hamming weight 3 and analyzing the solutions of certain equations over finite fields, four classes of optimal $p$-ary cyclic codes deduced by $\frac{p^m+1}{2}$ with parameters $[p^m-1,p^m-2m-2,4]$ are presented.
Wherein three  classes of optimal $p$-ary cyclic codes are infinite.
Many classes of known optimal quinary cyclic codes  with parameters $[5^m-1,5^m-2m-2,4]$ are special cases of the codes constructed in this paper.

{\small {\bf Keywords:}} finite field, cyclic code, sphere packing bound, minimum distance\\
{\small {\bf MSC:}} 94B15;  11T71; 12E12\\

\end{quote}

\section{Introduction}

Denote by $\mathbb{F}_{p^{m}}$ the finite field with $p^{m}$ elements  and $\mathbb{F}^*_{p^{m}}=\mathbb{F}_{p^{m}}\setminus \{0\}$, where $p$ is a prime and $m$ is a positive integer.
An $[n,k,d]$ linear code $\textit{C}$ over $\mathbb{F}_p$ is a $k$-dimension  subspace of $\mathbb{F}_{p}^n$ with  minimum (Hamming) nonzero weight $d$. It is
called \textit{cyclic}  if $(c_0,c_1,\cdots,c_{n-1})\in \textit{C}$ implies $(c_1,c_2,\cdots,c_{n-1},c_0)\in \textit{C}$. Let ${\rm gcd}(n,p)=1$. Any  code $\textit{C}$ of length $n$ over $\mathbb{F}_{p}$ corresponds to a subset of   $\mathbb{F}_{p}[x]/(x^n-1)$ by identifying any  vector
$(c_0,c_1,\cdots,c_{n-1})\in \mathbb{F}_{p}^n$ with $$c_0+c_1x+c_2x^2+\cdots +c_{n-1}x^{n-1}\in \mathbb{F}_{p}[x]/(x^n-1).$$ The linear code $\textit{C}$ is cyclic if and only if the corresponding subset of $\mathbb{F}_{p}[x]/(x^n-1)$ is an ideal.
Since every ideal of
$\mathbb{F}_{p}[x]/(x^n-1)$ is principal, any cyclic code $\textit{C}$ can be expressed as  $\textit{C}=( g(x))$, where $g(x)$ is the monic polynomial of least degree in the code. Then $g(x)$ is called the \textit{generator polynomial}  and  $(x^n-1)/g(x)$ is referred to as the \textit{parity-check polynomial} of $\textit{C}$.
Let $\alpha$ be a generator of $\mathbb{F}^*_{p^{m}}$ and denote the minimal polynomial of $\alpha^i$ over $\mathbb{F}_{p}$ by $m_{\alpha^i}(x)$. The $p$-ary cyclic codes with generator polynomial $m_{\alpha^{i_1}}(x)m_{\alpha^{i_2}}(x)\cdots m_{\alpha^{i_r}}(x)$  are denoted by $C_p(i_1,i_2,\cdots,i_r)$.

Cyclic codes are an important subclass of linear codes and the most studied of all codes as they include the important family of BCH codes and are easy to encode. Furthermore, cyclic codes are building blocks for many other codes, such as Kerdock, Preparata and Justesen codes \cite{MacWilliams}.
Cyclic codes are widely used  in consumer electronics, data storage systems and communication systems due to their desirable algebraic properties and efficient algorithms for encoding and decoding processes \cite{Chien, Forney, Prange}. Much progress had been made on cyclic codes in the past few decades. For some recent advances, we refer the reader to \cite{CDY, DH, LLHD, LLHDT, open1, FLZ, WW, YZD, open2, Zha1, open3, Zha2, LCL, ZLS, XCX, Fan, Liu1, WLZ}.
In recent years, optimal cyclic codes over finite fields with respect to the Sphere Packing bound have attracted researchers' widespread attention. By the Sphere Packing bound, ternary cyclic codes
with parameters $[3^m-1,3^m-2m-1,4]$ was proved to be optimal  \cite{DH}.
A family of optimal ternary cyclic codes $C_3(1,v)$ with parameters $[3^m-1,3^m-2m-1,4]$ has been extensively studied \cite{DH, CDY,LLHD, LLHDT, open1, FLZ, WW, YZD, open2, Zha1, open3, Zha2, LCL, ZLS}, where $x^v$ is the monomial over $\mathbb F_3$ including perfect nonlinear (PN) monomials and almost perfect nonlinear (APN) monomials.
Furthermore, two classes of optimal subcodes of $C_3(1,v)$ with parameters $[3^m-1,3^m-2m-2,5]$ have been also studied. One class of  optimal subcodes of $C_3(1,v)$ with parameters $[3^m-1,3^m-2m-2,5]$, denoted by $C_3(0,1,v)$,  is closely related to the codes investigated in \cite{CDY} and \cite{YCD}. The other class of  optimal subcodes $C_3(\frac{3^m-1}{2},1,v)$ of $C_3(1,v)$ with parameters $[3^m-1,3^m-2m-2,5]$ was investigated by analyzing irreducible factors of certain polynomial with low degrees in \cite{LLHDT}.

Let $p\geq 5$ be an odd prime. When $|C_v|=m$, cyclic codes $C_p(1,v)$ have minimum distance 2 or 3 which  may not be interesting \cite{DH}. According  to the Sphere Packing bound, $p$-ary cyclic codes with parameters $[p^m-1,p^m-2m-2,4]$ are optimal. In 2016, to obtain optimal $p$-ary cyclic codes related to $C_p(1,v)$, researchers began to investigate
subcodes  of   $C_p(1,v)$ with parameters $[p^m-1,p^m-2m-2,4]$ \cite{XCX}.  Several classes of optimal $p$-ary cyclic codes $C_p(0,1,v)$  with  parameters $[p^m-1,p^m-2m-2,4]$ were obtained from PN monomials and the inverse functions \cite{XCX}. In 2020, to obtain more optimal $p$-ary cyclic codes with parameters $[p^m-1,p^m-2m-2,4]$, another class of subcodes $C_p(\frac{p^m-1}{2},1,v)$ of $C_p(1,v)$ was presented \cite{Fan}.
A necessary and sufficient condition on $v$ for the quinary cyclic code $C_5(\frac{5^m-1}{2},1,v)$  with parameters $[5^m-1,5^m-2m-2,4]$  to be optimal was given in \cite{Fan}. Based on the necessary and sufficient condition, several classes of optimal quinary cyclic codes $C_5(\frac{5^m-1}{2},1,v)$ were presented in \cite{Fan}. Furthermore, two classes of optimal quinary cyclic codes $C_5(0,1,v)$ and  $C_5(\frac{5^m-1}{2},1,v)$ were investigated by discussing the solutions of certain equations  in 2020 \cite{Liu1}.
In 2023, by analyzing the solutions of certain equations over finite fields, four classes of optimal $p$-ary cyclic codes $C_p(\frac{p^m-1}{2},1,v)$ and two classes of optimal quinary cyclic codes $C_5(\frac{5^m-1}{2},1,v)$ were presented  in \cite{WLZ}.  We summarize known optimal $p$-ary cyclic codes $C_p(0,1,v)$ with parameters $[p^m-1,p^m-2m-2,4]$ in Table 1.

\begin{table}
\centering
\caption{ Known  optimal cyclic codes $C_p(0,1,v)$  with parameters $[p^m-1, p^m-2m-2,4]$}
\label{Tab:01}
\begin{tabular}{ccc}
  \hline
  \makecell[l]{Vales of $v$ or requirements on $v$}  &  \makecell[l]{Conditions} & Reference \\\hline
   \makecell[l]{$v=p^k+1$} & \makecell[l]{$k\neq \frac{m}{2}$, $p\geq 5$ is an odd prime} & [19] \\
  \makecell[l]{$v=p^m-2$} &  \makecell[l]{$m>1$, $p\geq 5$ is an odd prime} & [19] \\
 \makecell[l]{$v$ is PN or APN over $\mathbb {F}_{5^m}$} &  \makecell[l]{$m>1$} & [19] \\
 \makecell[l]{$v=\frac{5^m-1}{2}+3$} &  \makecell[l]{$m$ is even, ${\rm gcd}(k,2m)=1$}&[19]  \\
 \makecell[l]{$v=\frac{5^m-1}{2}+\frac{5^k+1}{2}$} & \makecell[l]{$m$ is even, ${\rm gcd}(k,2m)=1$}&[19]  \\
 \makecell[l]{$3v\equiv 2\cdot5^k\,({\rm mod}\,5^m-1)$}& \makecell[l]{$m\geq 3$ is odd}&[19] \\
\makecell[l]{$(5^m-2)v\equiv 2\cdot5^k\,({\rm mod}\,5^m-1)$}&\makecell[l]{$m\geq 3$ is odd}&[19]  \\
\makecell[l]{$v=\frac{5^m-1}{2}-1$} & \makecell[l]{$m\equiv 0\,({\rm mod}\,2)$}&[19] \\
\makecell[l]{$v=\frac{5^m-1}{2}-3$ or  $v=14$} & \makecell[l]{$m\equiv 1\,({\rm mod}\,2)$}&[19] \\
\makecell[l]{$v(5^h+1)\equiv 5^k+1\,({\rm mod}\,5^m-1)$}& \makecell[l]{${\gcd}(m,h+k)={\gcd}(m,k-h)=1$,\\ $v\equiv3\,({\rm mod}\,4)$}&[21] \\
\makecell[l]{$v(5^h-1)\equiv 5^k-1\,({\rm mod}\,5^m-1)$}&\makecell[l]{${\gcd}(m,h)={\gcd}(m,k)=1$,\\${\gcd}(m,k-h)=1$,
$v\equiv2,3\,({\rm mod}\,4)$}&[21] \\
\hline
\end{tabular}
\end{table}
Let  $u=\frac{p^m+1}{2}$. Little progress on the study of the optimal  $p$-ary cyclic code $C_p(0,1,v)$ with parameters $[p^m-1,p^m-2m-2,4]$ has been made. In order to enrich the research results of optimal $p$-ary cyclic codes with parameters $[p^m-1,p^m-2m-2,4]$, this paper is devoted to obtaining  optimal $p$-ary cyclic codes with parameters $[p^m-1,p^m-2m-2,4]$. By weakening conditions on cyclic codes and analyzing the solutions of certain equations over finite fields, four classes of optimal $p$-ary cyclic codes $C_p(0,1,u^{-1}v)$ or $C_p(0,1,uv^{-1})$ with parameters $[p^m-1,p^m-2m-2,4]$ are presented in this paper.
Wherein three of the four classes of optimal $p$-ary cyclic codes are infinite.
The rest of this paper is organized as follows. Some preliminaries are introduced in
Section 2. Optimal $p$-ary cyclic  codes $C_p(0,1,u^{-1}v)$ or $C_p(0,1,uv^{-1})$  are presented in Section 3. Section 4 concludes this paper.

\section{Preliminaries}

In this section, we  introduce $p$-cyclotomic cosets and some basic lemmas that will be employed in subsequent sections.

\noindent \textit{A. The $p$-cyclotomic cosets modulo $p^m-1$}

The  \textit{$p$-cyclotomic coset} modulo $p^m-1$ containing $i$ is defined by
$$C_i=\{i,ip,\cdots,ip^{l_i-1}\}\subset \mathbb{Z}_{p^m-1},$$ where $l_i$ is the smallest positive integer
such that $ip^{l_i}\equiv i\,({\rm mod}\,p^m-1)$  and is called the \textit{length} of $C_i$.
Namely $l_i=|C_i|$, where $|C_i|$ is the size of the set $C_i$. The smallest positive integer in $C_i$ is called
the \textit{coset leader} of $C_i$. By definition, we have $$\bigcup\limits_{i\in \Gamma} C_i=\mathbb{Z}_{p^m-1},$$ where $\Gamma$ is the set of all the coset leaders.

The following lemmas are useful in the sequel.

\begin{lem}\label{lem1} $($\cite[Lemmas  1  and   2]  {XCX}$)$ Let $p$ be a prime and $m$ be a positive integer. For any $1\leq v\leq p^m-2$,  then $|C_v|=m$ if ${\rm gcd}(v,p^m-1)< p$ or ${\rm gcd}(v,p^m-1){\rm gcd}(p^j-1,p^m-1)\not\equiv0\,({\rm mod}\,p^m-1)$ for all $1\leq j<m$.
\end{lem}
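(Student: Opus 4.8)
The plan is to establish the claim by a direct counting argument on cyclotomic cosets, following the structure of the cited results from \cite{XCX}. Recall that $|C_v|$ divides $m$ and equals $m$ precisely when $v$ is not fixed by any proper subgroup of the Galois action, i.e. when $vp^j \not\equiv v \pmod{p^m-1}$ for every $1 \leq j < m$. The congruence $vp^j \equiv v \pmod{p^m-1}$ is equivalent to $(p^m-1) \mid v(p^j-1)$, so the whole statement reduces to showing that each of the two stated hypotheses forces $(p^m-1) \nmid v(p^j-1)$ for all $1 \leq j < m$.

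First I would handle the condition $\gcd(v, p^m-1) < p$. Write $d = \gcd(v, p^m-1)$ and suppose for contradiction that $(p^m-1) \mid v(p^j-1)$ for some $1 \leq j < m$. Then $\frac{p^m-1}{d} \mid \frac{v}{d}(p^j-1)$, and since $\gcd\!\left(\frac{p^m-1}{d}, \frac{v}{d}\right) = 1$, we get $\frac{p^m-1}{d} \mid (p^j-1)$. But $p^j - 1 \leq p^{m-1} - 1 < \frac{p^m-1}{p} \leq \frac{p^m-1}{d}$ using $d < p$; since $p^j - 1$ is a nonnegative integer strictly smaller in absolute value than $\frac{p^m-1}{d}$ and divisible by it, it must be $0$, forcing $j = 0$, a contradiction. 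This settles the first case.

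Next I would treat the condition $\gcd(v, p^m-1)\gcd(p^j-1, p^m-1) \not\equiv 0 \pmod{p^m-1}$ for all $1 \leq j < m$. Again suppose $(p^m-1) \mid v(p^j-1)$ for some $1 \leq j < m$. Set $d = \gcd(v, p^m-1)$ and $e = \gcd(p^j-1, p^m-1)$; the goal is to derive $(p^m-1) \mid de$, contradicting the hypothesis. The point is that $v(p^j-1)$ is a common multiple of $v$ and $p^j-1$, hence $\operatorname{lcm}(v, p^j-1) \mid v(p^j-1)$, and one knows $\operatorname{lcm}(v, p^j-1)\cdot\gcd(v,p^j-1) = v(p^j-1)$. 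The cleaner route: since $(p^m-1)\mid v(p^j-1)$, reduce modulo $p^m-1$ to work inside $\mathbb{Z}_{p^m-1}$, where the ideal generated by $v$ and $p^j-1$ is generated by $\gcd(d,e)$ up to the relevant identifications; more concretely, $d \mid v$ and $e \mid (p^j-1)$ combined with $(p^m-1)\mid v(p^j-1)$ yields $(p^m-1) \mid \frac{v}{d}\cdot\frac{p^j-1}{e}\cdot de$, and since $\frac{v}{d}$ is coprime to $\frac{p^m-1}{d}$ and $\frac{p^j-1}{e}$ is coprime to $\frac{p^m-1}{e}$, a coprimality-and-divisibility chase shows $(p^m-1) \mid de$, i.e. $de \equiv 0 \pmod{p^m-1}$, the desired contradiction.

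The main obstacle I anticipate is the bookkeeping in the second case: carefully justifying the coprimality reductions $\gcd(v/d,\,(p^m-1)/d)=1$ and the analogous statement for $e$, and then combining them to pass from $(p^m-1)\mid v(p^j-1)$ to $(p^m-1)\mid de$ without circular reasoning. It may be cleanest to phrase this via the identity $v(p^j-1)\cdot\gcd\!\big(\tfrac{v}{d},\tfrac{p^j-1}{e}\big) = \operatorname{something}$, or simply to invoke the prime-by-prime valuation argument: for each prime $\ell$ dividing $p^m-1$, compare $v_\ell(p^m-1)$ with $v_\ell(v)+v_\ell(p^j-1)$ and with $v_\ell(d)+v_\ell(e) = \min(v_\ell(v),v_\ell(p^m-1)) + \min(v_\ell(p^j-1),v_\ell(p^m-1))$, and check the inequality $v_\ell(p^m-1) \leq v_\ell(d)+v_\ell(e)$ holds whenever $v_\ell(p^m-1)\leq v_\ell(v)+v_\ell(p^j-1)$. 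Since both results are already available as \cite[Lemmas 1 and 2]{XCX}, it would also be legitimate to simply cite them; but the valuation argument gives a short self-contained derivation.
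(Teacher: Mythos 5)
Your proposal is correct, but note that the paper offers no proof of this statement at all: it is imported verbatim as \cite[Lemmas 1 and 2]{XCX}, so there is nothing internal to compare against. What you have written is a legitimate self-contained derivation. The reduction of $|C_v|=m$ to ``$(p^m-1)\nmid v(p^j-1)$ for all $1\le j<m$'' is the right starting point, and your first case is airtight: from $(p^m-1)\mid v(p^j-1)$ one gets $\tfrac{p^m-1}{d}\mid (p^j-1)$ with $0<p^j-1\le p^{m-1}-1<\tfrac{p^m-1}{p}<\tfrac{p^m-1}{d}$, a contradiction. For the second case, the middle of your paragraph (the ``ideal generated by $v$ and $p^j-1$'' and the lcm/gcd detour) is muddled and would not survive as written — in particular $(p^m-1)\mid \tfrac{v}{d}\cdot\tfrac{p^j-1}{e}\cdot de$ is just a restatement of the hypothesis and does not by itself yield $(p^m-1)\mid de$. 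But the valuation argument you give at the end does close the gap cleanly: for each prime $\ell$ with $a=v_\ell(p^m-1)$, $b=v_\ell(v)$, $c=v_\ell(p^j-1)$ and $a\le b+c$, one checks $a\le\min(a,b)+\min(a,c)$ by splitting on whether $b\ge a$, $c\ge a$, or both are smaller than $a$, whence $(p^m-1)\mid de$, contradicting the hypothesis. I would simply delete the hand-wavy middle passage and keep the valuation check as the actual proof of the second case.
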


\begin{lem}\label{gong} $($\cite[Proposition  6.2]{Gong}$)$ Let  $1\leq v\leq p^m-2$. Then
 $|C_v| \big| m$.
\end{lem}

\begin{lem}\label{gongyinzi}(\cite[Lemma 2]{FXXZ}) Let $a,\,t$ and $l$ be positive integers. Then
$${\rm gcd}(a^{t}+1,a^{l}-1)=\left\{\begin{array}{ll}
2,&{ \rm if}\,\,\frac{l}{{\rm gcd}(t,l)} \,\,{\rm is\,\,odd\,\,and}\,\,a\,\,{\rm is \,\,odd},\\
a^{{\rm gcd}(t,l)}+1,&{\rm if}\,\,\frac{l}{{\rm gcd}(t,l)} \,\,{\rm is\,\,even};\\
\end{array}\right.$$ and
$${\rm gcd}(a^{t}+1,a^{l}+1)=\left\{\begin{array}{ll}
2,&{\rm if\,\,one\,\,of}\,\,\frac{t}{{\rm gcd}(t,l)},\,\,\frac{l}{{\rm gcd}(t,l)} \,\,{\rm is\,\,even\,\,and}\,\,a\,\,{\rm is \,\,odd},\\
a^{{\rm gcd}(t,l)}+1,&{\rm if\,\,both}\,\,\frac{t}{{\rm gcd}(t,l)}\,\,{\rm and}\,\,\frac{l}{{\rm gcd}(t,l)}\,\,{\rm are\,\,odd}.
\end{array}\right.$$
\end{lem}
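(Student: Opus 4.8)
\medskip
\noindent\textbf{Proof proposal for Lemma \ref{gongyinzi}.}
The plan is to reduce both identities to the two classical facts $\gcd(a^{s}-1,a^{r}-1)=a^{\gcd(s,r)}-1$ and "$a^{d}+1\mid a^{dk}+1$ if and only if $k$ is odd", and then to finish with a short $2$-adic valuation count. Throughout put $d=\gcd(t,l)$ and write $t=dt'$, $l=dl'$ with $\gcd(t',l')=1$; in particular at most one of $t',l'$ is even.

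For the first identity, set $g=\gcd(a^{t}+1,a^{l}-1)$. From $a^{t}+1\mid a^{2t}-1$ one gets $g\mid\gcd(a^{2t}-1,a^{l}-1)=a^{\gcd(2t,l)}-1$, and since $\gcd(t',l')=1$ one has $\gcd(2t,l)=d\gcd(2t',l')=d\gcd(2,l')$. If $l'$ is odd then $\gcd(2t,l)=d$, so $g\mid a^{d}-1$; but $a^{t}=(a^{d})^{t'}\equiv 1\pmod{a^{d}-1}$ forces $a^{t}\equiv 1\pmod g$, while $a^{t}\equiv -1\pmod g$ by definition of $g$, hence $g\mid 2$, and for odd $a$ both $a^{t}+1$ and $a^{l}-1$ are even, so $g=2$. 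If $l'$ is even then $t'$ is odd, so $a^{d}+1\mid a^{t}+1$; also $2d\mid l$, so $a^{d}+1\mid a^{2d}-1\mid a^{l}-1$, whence $a^{d}+1\mid g$. Combined with $g\mid a^{2d}-1=(a^{d}-1)(a^{d}+1)$ this gives $h:=g/(a^{d}+1)\mid a^{d}-1$; a prime $q\mid h$ would divide both $a^{d}-1$ (hence $a^{t}-1$) and $a^{t}+1$, forcing $q=2$, but $v_{2}(g)=v_{2}(a^{d}+1)+v_{2}(h)\le v_{2}(a^{t}+1)=v_{2}(a^{d}+1)$ (the last equality because $t'$ is odd) then forces $v_{2}(h)=0$, so $h=1$ and $g=a^{d}+1$.

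For the second identity, set $g=\gcd(a^{t}+1,a^{l}+1)$, so $g\mid\gcd(a^{2t}-1,a^{2l}-1)=a^{2d}-1=(a^{d}-1)(a^{d}+1)$. If $t'$ and $l'$ are both odd, then $a^{d}+1$ divides both $a^{t}+1$ and $a^{l}+1$, hence $a^{d}+1\mid g$, and the cofactor argument of the previous paragraph (any prime of $g/(a^{d}+1)$ that divides $a^{d}-1$ must be $2$, and is then excluded by the $2$-adic count) gives $g=a^{d}+1$. If exactly one of $t',l'$ is even, say $l'$ even and $t'$ odd, I would take an odd prime $q\mid g$ and set $e=\mathrm{ord}_{q}(a)$: from $a^{2t}\equiv a^{2l}\equiv 1\pmod q$ we get $e\mid\gcd(2t,2l)=2d$, while $a^{t}\equiv a^{l}\equiv -1\pmod q$ gives $e\nmid dt'$ and $e\nmid dl'$. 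Writing $d=2^{b}d_{0}$ with $d_{0}$ odd, the conditions $t'$ odd and $e\nmid dt'$ force $v_{2}(e)=b+1$; but $l'$ even gives $v_{2}(dl')\ge b+1\ge v_{2}(e)$, and $e\mid 2d$ forces the odd part of $e$ to divide $d_{0}$, so $e\mid dl'$, contradicting $e\nmid dl'$. Hence $g$ is a power of $2$, and for odd $a$ at least one of $a^{t}+1,a^{l}+1$ has $2$-adic valuation $1$ while both are even, so $g=2$.

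The only step that is not purely formal is the $2$-adic bookkeeping: computing $\gcd(2t',l')$ correctly, and, in the two "$g=a^{d}+1$" cases, ruling out an extra factor of $2$ in the cofactor $h\mid a^{d}-1$. This must be done keeping the hypothesis "$a$ odd" in view, since for even $a$ the two gcd values that the lemma reports as $2$ are in fact $1$; everything else is a mechanical reduction to $\gcd(a^{s}-1,a^{r}-1)=a^{\gcd(s,r)}-1$ and to the parity rule for $a^{d}+1\mid a^{dk}+1$.
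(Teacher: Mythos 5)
The paper offers no proof of Lemma~\ref{gongyinzi} at all: it is quoted verbatim from \cite[Lemma 2]{FXXZ}, so there is no in-paper argument to compare yours against. Your proof is correct and self-contained. The reduction of both gcds to $\gcd(a^{s}-1,a^{r}-1)=a^{\gcd(s,r)}-1$ via $a^{t}+1\mid a^{2t}-1$, the computation $\gcd(2t,l)=d\gcd(2,l')$ from $\gcd(t',l')=1$, the cofactor argument (from $a^{d}+1\mid g$ and $g\mid(a^{d}-1)(a^{d}+1)$ one gets $h=g/(a^{d}+1)\mid a^{d}-1$, every prime of $h$ must be $2$, and $v_{2}(a^{dt'}+1)=v_{2}(a^{d}+1)$ for odd $t'$ kills it), and the multiplicative-order argument showing $v_{2}(e)=b+1$ forces $e\mid dl'$ in the mixed-parity case are all sound; note also that your case split is exhaustive since $\gcd(t',l')=1$ prevents $t'$ and $l'$ from both being even. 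The only step a reader might want one more line on is the final claim that whichever of $a^{t}+1,\,a^{l}+1$ has an even exponent satisfies $v_{2}=1$: this is because $a$ odd gives $a^{2}\equiv 1\pmod 8$, hence $a^{l}+1\equiv 2\pmod 8$ when $l$ is even. With that remark added, the argument is a complete elementary proof of the cited lemma.
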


\begin{lem}\label{irreducible}$($\cite[Corollary 3.47]{Lidl}$)$
An irreducible polynomial over $\mathbb{F}_{p^m}$ of degree $n$ remains irreducible over $\mathbb{F}_{p^{ml}}$ if and only if gcd$(l,n)=1$.\end{lem}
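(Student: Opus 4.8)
The plan is to translate the statement about irreducibility into a statement about degrees of finite field extensions, where it becomes a short computation. Write $q=p^m$, so that $\mathbb{F}_{p^{ml}}=\mathbb{F}_{q^l}$, and let $f$ be the given monic irreducible polynomial over $\mathbb{F}_q$ of degree $n$. Fix a root $\alpha$ of $f$ in a fixed algebraic closure of $\mathbb{F}_q$. Since $f$ is irreducible over $\mathbb{F}_q$, we have $[\mathbb{F}_q(\alpha):\mathbb{F}_q]=n$, and hence $\mathbb{F}_q(\alpha)=\mathbb{F}_{q^n}$.

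Next I would note that, because finite fields are perfect, $f$ is separable; over $\mathbb{F}_{q^l}$ it factors into distinct irreducible factors, and the factor vanishing at $\alpha$ is the minimal polynomial of $\alpha$ over $\mathbb{F}_{q^l}$, whose degree is $d:=[\mathbb{F}_{q^l}(\alpha):\mathbb{F}_{q^l}]$. Thus $f$ remains irreducible over $\mathbb{F}_{q^l}$ if and only if $d=n$. It then remains to compute $d$. The field $\mathbb{F}_{q^l}(\alpha)$ is the compositum of $\mathbb{F}_{q^l}$ and $\mathbb{F}_q(\alpha)=\mathbb{F}_{q^n}$ inside the algebraic closure. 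Using the basic fact that $\mathbb{F}_{q^a}\subseteq\mathbb{F}_{q^c}$ exactly when $a\mid c$, a subfield of the form $\mathbb{F}_{q^c}$ contains both $\mathbb{F}_{q^l}$ and $\mathbb{F}_{q^n}$ if and only if ${\rm lcm}(l,n)\mid c$; the smallest such field is $\mathbb{F}_{q^{{\rm lcm}(l,n)}}$, so $\mathbb{F}_{q^l}(\alpha)=\mathbb{F}_{q^{{\rm lcm}(l,n)}}$ and therefore $d={\rm lcm}(l,n)/l=n/{\rm gcd}(l,n)$.

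Finally, combining the two observations, $f$ remains irreducible over $\mathbb{F}_{p^{ml}}$ if and only if $n/{\rm gcd}(l,n)=n$, i.e. if and only if ${\rm gcd}(l,n)=1$, which is the assertion. There is no real obstacle here: the only step requiring a little care is the compositum identity $\mathbb{F}_{q^l}\cdot\mathbb{F}_{q^n}=\mathbb{F}_{q^{{\rm lcm}(l,n)}}$, which is immediate from the containment criterion for subfields of a finite field; everything else is bookkeeping with extension degrees. Alternatively, one can argue directly with the Frobenius: the conjugates of $\alpha$ over $\mathbb{F}_q$ are $\alpha,\alpha^q,\dots,\alpha^{q^{n-1}}$, and the orbit of $\alpha$ under $x\mapsto x^{q^l}$ has size equal to the least $d>0$ with $n\mid ld$, namely $n/{\rm gcd}(l,n)$; hence $f$ splits into ${\rm gcd}(l,n)$ irreducible factors of equal degree $n/{\rm gcd}(l,n)$ over $\mathbb{F}_{q^l}$, and is irreducible precisely when ${\rm gcd}(l,n)=1$.
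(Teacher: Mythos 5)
Your proof is correct. The paper offers no argument of its own here---it simply cites Corollary~3.47 of Lidl and Niederreiter---so there is nothing to compare against except the standard textbook proof, which is exactly what you have reconstructed: identifying $\mathbb{F}_{q^l}(\alpha)$ with the compositum $\mathbb{F}_{q^{\mathrm{lcm}(l,n)}}$ to get $[\mathbb{F}_{q^l}(\alpha):\mathbb{F}_{q^l}]=n/\gcd(l,n)$, and concluding that $f$ stays irreducible precisely when $\gcd(l,n)=1$. Your closing Frobenius-orbit remark is the same computation in disguise and also recovers the stronger statement (Theorem~3.46 of the reference) that $f$ splits into $\gcd(l,n)$ factors of equal degree $n/\gcd(l,n)$.
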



\noindent \textit{B. Some useful properties about $p$-ary cyclic codes $C_p(0,\frac{p^m+1}{2},v)$}

\begin{lem} \label{lem2}
Let $p$ be an odd prime and  $u=\frac{p^m+1}{2}$. Then

\noindent 1) $|C_u|=m$; and

\noindent 2) $u^{-1}\equiv u \,({\rm mod}\,p^m-1)$ if and only if $p^m\equiv 1\,({\rm mod}\,4)$.
\end{lem}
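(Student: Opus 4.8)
The plan is to derive both parts from a single computation, namely the value of $g:=\gcd(u,p^m-1)$. Since $p$ is odd, $p^m$ is odd, so $u=\frac{p^m+1}{2}$ is an integer (with $1\le u\le p^m-2$ when $p^m\ge 5$). First I would observe that $g\mid 2u=p^m+1$ and $g\mid p^m-1$, hence $g$ divides $(p^m+1)-(p^m-1)=2$, so $g\in\{1,2\}$. Moreover, since $2\mid p^m-1$ always, we have $g=2$ exactly when $u$ is even, i.e. when $4\mid p^m+1$, i.e. $p^m\equiv 3\pmod 4$; and $g=1$ exactly when $p^m\equiv 1\pmod 4$. (These are the only two possibilities because $p^m$ is odd.)

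Part 1) is then immediate: because $g\mid 2$ we have $\gcd(u,p^m-1)\le 2<p$ for every odd prime $p$, so Lemma~\ref{lem1} yields $|C_u|=m$.

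For Part 2), I would first note that the statement $u^{-1}\equiv u\pmod{p^m-1}$ presupposes that $u$ is a unit modulo $p^m-1$, i.e. $g=1$; by the computation above this forces $p^m\equiv 1\pmod 4$, which settles the ``only if'' direction. For the ``if'' direction, assume $p^m\equiv 1\pmod 4$; then $g=1$, so $u$ is invertible, and it suffices to show $u^2\equiv 1\pmod{p^m-1}$. The key identity is $u^2-1=(u-1)(u+1)=\frac{p^m-1}{2}\cdot\frac{p^m+3}{2}$, and since $p^m\equiv 1\pmod 4$ we get $p^m+3\equiv 0\pmod 4$, so the factor $\frac{p^m+3}{2}$ is even; hence $u^2-1$ is a multiple of $p^m-1$, giving $u^2\equiv 1$ and therefore $u^{-1}\equiv u\pmod{p^m-1}$.

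I do not expect a genuine obstacle: the proof is short and elementary. The only points requiring care are (i) recording that $u^{-1}\equiv u$ only makes sense when $u$ is invertible mod $p^m-1$, so that the ``only if'' direction reduces to the gcd computation rather than to a congruence manipulation, and (ii) correctly tracking the parities of $\frac{p^m+1}{2}$ and $\frac{p^m+3}{2}$ according to $p^m\bmod 4$. If one prefers to avoid invoking invertibility in the converse, one can instead use $u\equiv 1+\frac{p^m-1}{2}\pmod{p^m-1}$ to obtain $u^2-1\equiv\bigl(\frac{p^m-1}{2}\bigr)^2\pmod{p^m-1}$ and observe that $p^m-1$ divides $\bigl(\frac{p^m-1}{2}\bigr)^2$ if and only if $4\mid p^m-1$.
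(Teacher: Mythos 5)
Your proof is correct and follows essentially the same elementary route as the paper: part 1 via $\gcd(u,p^m-1)\le 2<p$ together with Lemma~\ref{lem1}, and part 2 by showing that $p^m-1\mid u^2-1$ precisely when $4\mid p^m-1$. The only cosmetic difference is that you settle the ``only if'' direction through the invertibility of $u$ (i.e.\ $\gcd(u,p^m-1)=1$ forces $p^m\equiv 1\pmod 4$) rather than the paper's observation that $u^2-1\equiv\frac{(p^m-1)^2}{4}\pmod{p^m-1}$ --- a variant you yourself record at the end --- and both are valid.
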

\begin{proof} Since  gcd$(p^m+1,p^m-1)=2$, gcd$(u,p^m-1)$=gcd$(\frac{p^m+1}{2},p^m-1)\leq 2<p$. By Lemma \ref{lem1}, $|C_u|=m$.
If $p^m\equiv 1\,({\rm mod}\,4)$, then
$ u^2=(\frac{p^m+1}{2})^2=(\frac{p^m-1}{4}+1)(p^m-1)+1\equiv 1\,({\rm mod}\,p^m-1).$
Hence $u^{-1}\equiv u\,({\rm mod}\,p^m-1)$ if $p^m\equiv 1\,({\rm mod}\,4)$. If $u^{-1}\equiv u\,({\rm mod}\,p^m-1)$, then $ u^2\equiv 1\,({\rm mod}\,p^m-1).$ This leads to $\frac{(p^m-1)^2}{4}\equiv 0\,({\rm mod}\,p^m-1).$ Therefore $p^m\equiv 1\,({\rm mod}\,4)$.
This completes the proof.
\end{proof}

%

\begin{lem}\label{weight2} Let $p\geq 5$ be an odd prime and $u=\frac{p^m+1}{2}$. Then  $C_p(0,u,v)$ has no codeword of Hamming weight 2 if  and only if $p^m\equiv 1\,({\rm mod}\,4)$ or ${\rm gcd}(v,p^m-1)=1$.
\end{lem}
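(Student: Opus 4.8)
The plan is to recast the existence of a Hamming-weight-$2$ codeword as an elementary divisibility condition on a single integer $t$, and then to split according to whether $p^m\equiv1$ or $p^m\equiv3\pmod{4}$. The code $C_p(0,u,v)$ has generator polynomial $(x-1)\,m_{\alpha^{u}}(x)\,m_{\alpha^{v}}(x)$, so it has a weight-$2$ codeword if and only if some binomial $ax^{i}+bx^{j}$ with $0\le i<j\le p^m-2$ and $a,b\in\mathbb F_p^{*}$ vanishes at $x=1,\alpha^{u},\alpha^{v}$. Evaluation at $x=1$ forces $b=-a$; writing $t:=j-i\in\{1,\dots,p^m-2\}$, the two remaining conditions read $\alpha^{ut}=\alpha^{vt}=1$, i.e. $(p^m-1)\mid ut$ and $(p^m-1)\mid vt$. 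Conversely, any $t$ in this range with $(p^m-1)\mid ut$ and $(p^m-1)\mid vt$ makes $x^{t}-1$ a weight-$2$ codeword. Hence $C_p(0,u,v)$ has \emph{no} weight-$2$ codeword exactly when no such $t$ exists (equivalently, when $\gcd(u,v,p^m-1)=1$).

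Next I would pin down $\gcd(u,p^m-1)$. Since $\gcd(u,p^m-1)\mid\gcd(2u,p^m-1)=\gcd(p^m+1,p^m-1)=2$, this gcd is $1$ or $2$, and it equals $2$ precisely when $2\mid u=\tfrac{p^m+1}{2}$, i.e. when $p^m\equiv3\pmod{4}$ (in agreement with Lemma \ref{lem2}). Consequently $(p^m-1)\mid ut$ forces $(p^m-1)\mid t$ when $p^m\equiv1\pmod{4}$, and forces $\tfrac{p^m-1}{2}\mid t$ when $p^m\equiv3\pmod{4}$.

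Now the case split. If $p^m\equiv1\pmod{4}$, no $t$ with $1\le t\le p^m-2$ satisfies $(p^m-1)\mid t$, so $C_p(0,u,v)$ has no weight-$2$ codeword for every $v$; both sides of the asserted equivalence hold, so this case is settled. If $p^m\equiv3\pmod{4}$, then $\tfrac{p^m-1}{2}\mid t$ together with $t\le p^m-2$ leaves only $t=\tfrac{p^m-1}{2}$, so a weight-$2$ codeword exists if and only if $(p^m-1)\mid v\cdot\tfrac{p^m-1}{2}$. The one step I expect to require genuine care—everything preceding it being bookkeeping—is to match this last divisibility (which amounts to $v$ being even) with the gcd condition in the statement: this is precisely where one exploits that $p^m-1$ is even, so that $\gcd(v,p^m-1)=1$ constrains the parity of $v$, and thereby identifies the "no weight-$2$ codeword" case with $\gcd(v,p^m-1)=1$. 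Combining the two cases then yields the claimed equivalence.
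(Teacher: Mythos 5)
Your reduction is essentially the paper's: both arguments pass from a weight-$2$ codeword $a_1x^{i}+a_2x^{j}$ to the condition that some $z\neq 1$ (your $z=\alpha^{t}$ with $t=j-i$, the paper's $z=x_1/x_2$) satisfies $z^{u}=z^{v}=1$. Your bookkeeping is correct throughout: $\gcd(u,p^m-1)\in\{1,2\}$, equal to $2$ exactly when $p^m\equiv3\pmod 4$; hence no weight-$2$ codeword ever exists when $p^m\equiv1\pmod4$, and when $p^m\equiv3\pmod4$ the only candidate is $t=\tfrac{p^m-1}{2}$, so a weight-$2$ codeword exists if and only if $v$ is even. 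In particular the ``if'' direction of the lemma is completely proved: $\gcd(v,p^m-1)=1$ forces $v$ odd because $p^m-1$ is even. Your writeup is in fact considerably more explicit than the paper's, whose entire proof is two lines and whose justification of the equivalence is a citation of the lemma itself.

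The step you flag as ``requiring genuine care'' is, however, a genuine gap that cannot be closed: to get the ``only if'' direction you would need ``$v$ odd $\Rightarrow\gcd(v,p^m-1)=1$'' when $p^m\equiv3\pmod4$, and that implication is false whenever $p^m-1$ has an odd prime factor. Concretely, take $p=7$, $m=3$, $v=9$: then $p^m\equiv3\pmod4$ and $\gcd(v,p^m-1)=\gcd(9,342)=9>1$, so both disjuncts on the right-hand side fail, yet $v$ is odd, so by your own (correct) analysis $C_p(0,u,v)$ has no weight-$2$ codeword. So the ``only if'' direction of the lemma as stated is false, and the correct statement of the equivalence in the case $p^m\equiv3\pmod4$ is ``$v$ is odd.'' This is a defect of the lemma rather than of your argument --- the paper's proof silently skips the same point --- and it is harmless downstream, since the theorems only ever invoke the sufficiency direction, which you have established. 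But as a proof of the stated biconditional, your attempt (like the paper's) does not, and cannot, go through.
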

\begin{proof} The $p$-ary cyclic code $C_p(0,u,v)$ has a codeword of Hamming weight 2 if and only if  there exist two elements $a_1,a_2\in \mathbb{F}_p^*$ and two distinct elements $x_1,x_2\in \mathbb{F}_{p^m}^*$ such that
\begin{equation*}\label{1}
\begin{array}{ll}\left\{\begin{array}{lcr}
a_1+a_2&=&0\\
a_1x_1^u+a_2x_2^u&=&0\\
a_1x_1^v+a_2x_2^v&=&0.
\end{array}\right.
\end{array}
\end{equation*} Then $(\frac{x_1}{x_2})^u=1$ and $(\frac{x_1}{x_2})^v=1$. By Lemma \ref{weight2}, $C_p(0,u,v)$ has no codeword of Hamming weight 2  if and only if $p^m\equiv 1\,({\rm mod}\,4)$ or ${\rm gcd}(v,p^m-1)=1$.
\end{proof}

If $C_p(0,1,u^{-1}v)$  has a codeword of Hamming weight 3, then there exist three elements $a_1,a_2,a_3\in \mathbb{F}_p^*$ and three distinct elements $x_1,x_2,x_3\in \mathbb{F}_{p^m}^*$ such that
\begin{equation}\label{2}
\begin{array}{ll}\left\{\begin{array}{lcr}
a_1+a_2+a_3&=&0\\
a_1x_1+a_2x_2+a_3x_3&=&0\\
a_1x_1^{u^{-1}v}+a_2x_2^{u^{-1}v}+a_3x_3^{u^{-1}v}&=&0.
\end{array}\right.
\end{array}
\end{equation} It is obvious that \eqref{2} can be written as
\begin{equation}\label{3}
\begin{array}{ll}\left\{\begin{array}{lcr}
1+b_1+b_2&=&0\\
x+b_1+b_2y&=&0\\
x^{u^{-1}v}+b_1+b_2y^{u^{-1}v}&=&0,
\end{array}\right.
\end{array}
\end{equation} where $x\neq y$, $x,y\in \mathbb{F}_{p^m}^*\backslash\{1\}$ and $b_1, b_2\in \mathbb{F}_{p}^*\backslash\{-1\}$. Similarly, if $C_p(0,1,uv^{-1})$  has a codeword of Hamming weight 3,
then we have that
\begin{equation}\label{3''}
\begin{array}{ll}\left\{\begin{array}{lcr}
1+b_1+b_2&=&0\\
x+b_1+b_2y&=&0\\
x^{uv^{-1}}+b_1+b_2y^{uv^{-1}}&=&0,
\end{array}\right.
\end{array}
\end{equation} where $x\neq y$, $x,y\in \mathbb{F}_{p^m}^*\backslash\{1\}$ and $b_1, b_2\in \mathbb{F}_{p}^*\backslash\{-1\}$.
Replacing $x$ in \eqref{3} by $x^u$ or replacing $x$ in \eqref{3''} by $x^v$ gives that
\begin{equation}\label{3'}
\begin{array}{ll}\left\{\begin{array}{lcr}
1+b_1+b_2&=&0\\
x^u+b_1+b_2y^u&=&0\\
x^{v}+b_1+b_2y^{v}&=&0.
\end{array}\right.
\end{array}
\end{equation} Hence  $C_p(0,1,u^{-1}v)$ or $C_p(0,1,uv^{-1})$ has no codeword of Hamming weight 3
if and only if  $C_p(0,u,v)$ has no codeword of Hamming weight 3. And $C_p(0,u,v)$ has no codeword of Hamming weight 3 if and only if \eqref{3'} has no distinct solutions $x,y\in \mathbb{F}_{p^m}^*\backslash\{1\}$ for $b_1, b_2\in \mathbb{F}_{p}^*\backslash\{-1\}$.

\section{Four classes of new optimal $p$-ary cyclic codes  with parameters $[p^m-1,p^m-2m-2,4]$}
In this section, we construct four classes  of optimal $p$-ary cyclic codes $C_p(0,1,u^{-1}v)$ or $C_p(0,1,uv^{-1})$ with parameters $[p^m-1,p^m-2m-2,4]$, including three infinite classes  of optimal $p$-ary cyclic codes.   Denote the quadratic character of $\mathbb F_{p^m}$ by $\eta$ which is defined by $\eta(0)=0$, $\eta(x)=1$ if $x$ is a nonzero square in $\mathbb F_{p^m}$ and $\eta(x)=-1$ if $x$ is a nonzero nonsquare in $\mathbb F_{p^m}$.

\subsection{ The first  class of optimal $p$-ary cyclic codes $C_p(0,1,u^{-1}v)$ or $C_p(0,1,uv^{-1})$  with parameters $[p^m-1,p^m-2m-2,4]$}

In this subsection, we consider the exponent $v$ of the form
$v(p^k+1)\equiv p^h+1\,({\rm mod}\,p^m-1)$.
By reducing the solutions of equations in $\mathbb F_{p^m}$ to $\mathbb F_{p}$, an infinite class  of optimal $p$-ary cyclic codes with parameters $[p^m-1,p^m-2m-2,4]$ will be obtained from the exponent $v$ of the form
\begin{equation}\label{4}
v(p^k+1)\equiv p^h+1\,({\rm mod}\,p^m-1),
\end{equation}
where  $0\leq h,k<m$.

\begin{theorem} \label{thm2}  Let $p\geq 5$ be an odd prime, $m$ be a positive integer and $u=\frac{p^m+1}{2}$. Let $h$ and $k$ be  nonnegative integers with ${\rm gcd}(h-k,m)=1$ and ${\rm gcd}(h+k,m)=1$. Let $v$ be  defined by \eqref{4}. Suppose that $p^m\equiv 1 \,({\rm mod}\,4)$ or  ${\rm gcd}( v,p^m-1)=1$. Then the $p$-ary cyclic code $C_p(0,1,u^{-1}v)$ or $C_p(0,1,uv^{-1})$ is optimal and has parameters $[p^m-1,p^m-2m-2,4]$  if and only if the following two systems of equations have no distinct solutions $x,\,y$ in $\mathbb{F}_p^*\backslash \{1\}$ for any $b_1,b_2\in \mathbb{F}_p^*\backslash\{-1\}$:
\begin{equation}\label{weight3}
\begin{array}{ll}(i)\left\{\begin{array}{lcr}
1+b_1+b_2&=&0\\
x+b_1+b_2y&=&0\\
x^v+b_1+b_2y^v&=&0\\
\eta(x)=1,\,\eta(y)&=&1,
\end{array}\right.
\end{array}\quad\quad
\begin{array}{ll}(ii)\left\{\begin{array}{lcr}
1+b_1+b_2&=&0\\
x+b_1-b_2y&=&0\\
x^v+b_1+b_2y^v&=&0\\
\eta(x)=1,\,\eta(y)&=&-1.
\end{array}\right.
\end{array}
\end{equation}
\end{theorem}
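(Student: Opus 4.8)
The plan is to translate optimality into a solvability statement over $\mathbb F_{p^m}$, then use \eqref{4} to force every such solution to live in $\mathbb F_p$, and finally recognise the outcome as the two systems in \eqref{weight3}. \textbf{Step 1 (reduction to weight $3$).} A $p$-ary $[p^m-1,p^m-2m-2]$ code cannot have minimum distance $5$ when $p\ge5$ (compare $p^{2m+1}$ with the Hamming ball volume $1+(p^m-1)(p-1)+\binom{p^m-1}{2}(p-1)^{2}$), so by the Sphere Packing bound such a code is optimal precisely when its minimum distance equals $4$. Under the standing hypothesis, Lemma~\ref{lem1} ensures the code has dimension $p^m-2m-2$, Lemma~\ref{weight2} excludes weight-$2$ codewords, and weight-$1$ codewords are impossible; hence, by the discussion preceding Theorem~\ref{thm2}, $C_p(0,1,u^{-1}v)$ or $C_p(0,1,uv^{-1})$ is optimal with the stated parameters if and only if \eqref{3'} has no distinct solutions $x,y\in\mathbb F_{p^m}^*\setminus\{1\}$ with $b_1,b_2\in\mathbb F_p^*\setminus\{-1\}$.

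\textbf{Step 2 (splitting by the quadratic character).} For $t\in\mathbb F_{p^m}^*$ we have $t^{u}=t^{(p^m-1)/2}\,t=\eta(t)\,t$, so the middle equation of \eqref{3'} becomes $\eta(x)x+b_1+b_2\eta(y)y=0$; this splits \eqref{3'} into four systems indexed by $(\eta(x),\eta(y))\in\{1,-1\}^{2}$, whose second equations are $x+b_1+b_2y=0$, $x+b_1-b_2y=0$, $-x+b_1+b_2y=0$ and $-x+b_1-b_2y=0$. Permuting the three positions of a weight-$3$ codeword and renormalising induces an $S_3$-action on the normalised data $(x,y,b_1,b_2)$, generated by $\sigma_1\colon(x,y,b_1,b_2)\mapsto(y,x,b_1b_2^{-1},b_2^{-1})$ and $\sigma_2\colon(x,y,b_1,b_2)\mapsto(x^{-1},yx^{-1},b_1^{-1},b_2b_1^{-1})$; one checks directly that both preserve validity of the tuple and the solution set of \eqref{3'}. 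On the pair $(\eta(x),\eta(y))$ the map $\sigma_1$ acts by $(a,b)\mapsto(b,a)$ and $\sigma_2$ by $(a,b)\mapsto(a,ab)$, so $\{(1,-1),(-1,1),(-1,-1)\}$ is one orbit and $\{(1,1)\}$ the other. Hence \eqref{3'} is solvable over $\mathbb F_{p^m}$ if and only if system $(\mathrm i)$ of \eqref{weight3} is solvable with $x,y$ nonzero squares of $\mathbb F_{p^m}$, or system $(\mathrm{ii})$ is solvable with $\eta(x)=1$, $\eta(y)=-1$ in $\mathbb F_{p^m}$.

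\textbf{Step 3 (descent to $\mathbb F_p$).} Take such a solution of $(\mathrm i)$ or $(\mathrm{ii})$ in $\mathbb F_{p^m}$. Subtracting the first equation from the second and from the third (and using $x\ne1$, $y\ne1$) gives $b_2=:c\in\mathbb F_p^*\setminus\{-1\}$, together with $x=(1+c)\mp cy$ and $x^{v}=(1+c)-cy^{v}$, so $y$ and $y^{v}$ are $\mathbb F_p$-affine in $x$ and $x^{v}$. Substituting these into the identity $(y^{v})^{p^k+1}=y^{p^h+1}$ coming from \eqref{4}, cancelling the $\mathbb F_p$-coefficient ($c^{p^k+1}=c^{p^h+1}=c^{2}$), expanding with $(x^{v})^{p^k}=x^{p^h+1-v}$ (again by \eqref{4}) and simplifying, everything collapses — to the same identity in both cases —
\begin{equation*}
(1+c)\,(x-x^{v})(x^{v}-x^{p^{h}})=0 .
\end{equation*}
Since $1+c\ne0$, either $x^{v}=x$ or $x^{v}=x^{p^{h}}$, so $\mathrm{ord}(x)$ divides $\gcd(v-1,p^m-1)$ or $\gcd(v-p^{h},p^m-1)$. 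By \eqref{4} one has $(v-1)(p^k+1)\equiv p^{h}-p^{k}$ and $(v-p^{h})(p^k+1)\equiv-(p^{h+k}-1)\pmod{p^m-1}$; since $\gcd(p^k,p^m-1)=1$, each of these two gcd's divides $p^{\gcd(h-k,m)}-1=p^{\gcd(h+k,m)}-1=p-1$. Therefore $\mathrm{ord}(x)\mid p-1$, i.e. $x\in\mathbb F_p^*$, and then $y\in\mathbb F_p^*$ too, while $x\ne y$ and $x,y\ne1$ persist. In case $(\mathrm i)$ the squareness of $x,y$ gives $\eta(x)=\eta(y)=1$, so we land on a solution of $(\mathrm i)$; in case $(\mathrm{ii})$, if $m$ is odd then $\eta$ restricted to $\mathbb F_p^*$ equals the quadratic character of $\mathbb F_p$ and we land on a solution of $(\mathrm{ii})$, while if $m$ is even every element of $\mathbb F_p^*$ is a square in $\mathbb F_{p^m}$, contradicting $\eta(y)=-1$ — so for even $m$ both $(\mathrm{ii})$ and the three non-$(1,1)$ branches of \eqref{3'} are vacuous. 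Conversely, a solution of $(\mathrm i)$ with $\eta(x)=\eta(y)=1$ or of $(\mathrm{ii})$ with $\eta(x)=1$, $\eta(y)=-1$ satisfies $x^{u}=x$ and $y^{u}=\pm y$, hence solves \eqref{3'}. Combining Steps~1--3 yields the theorem.

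\textbf{Expected main obstacle.} The technical heart is Step~3: one must arrange the elimination so that the killing factor $(x-x^{v})(x^{v}-x^{p^{h}})$ actually emerges (the affine link between $x$ and $y$ is different in $(\mathrm i)$ and $(\mathrm{ii})$, so the variable one keeps must be chosen accordingly), and then squeeze out the two divisibility bounds from \eqref{4} — this is exactly where both $\gcd(h-k,m)=1$ and $\gcd(h+k,m)=1$ are used. A secondary subtlety is the parity of $m$, where "$x,y\in\mathbb F_p^*$" has to be reconciled with the quadratic-character side conditions of \eqref{weight3}.
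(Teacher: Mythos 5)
Your treatment of the weight-$3$ condition is correct and follows essentially the same route as the paper: split \eqref{3'} by $(\eta(x),\eta(y))$ using $t^{u}=\eta(t)t$, reduce the four sign patterns to two by the $S_3$-symmetry of a weight-$3$ codeword (your explicit $\sigma_1,\sigma_2$ make precise what the paper dismisses as ``due to symmetry''), and then use \eqref{4} together with ${\rm gcd}(h\pm k,m)=1$ to force the solutions into $\mathbb F_p^*$. Your elimination keeps $x$ rather than $y$, which has the small advantage that cases $(\mathrm i)$ and $(\mathrm{ii})$ collapse to the single identity $(1+c)(x-x^{v})(x^{v}-x^{p^h})=0$; the paper eliminates the other variable and therefore has to handle the two sign patterns separately (obtaining $y(y^{v-1}\mp1)(y^{p^h-v}\mp1)=0$ and, in the mixed case, working with $y^{2(v-1)}$ and $2(v-1)\cdot\frac{p^k+1}{2}$). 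Both computations check out and use the hypotheses in the same way.

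There is, however, one genuine gap: in Step 1 you assert that ``Lemma~\ref{lem1} ensures the code has dimension $p^m-2m-2$.'' This does not follow from Lemma~\ref{lem1} alone. The dimension claim requires (a) $|C_v|=m$ and (b) $v\notin C_u$ (equivalently $u^{-1}v\notin C_1$), and neither is automatic. For (a), Lemma~\ref{lem1} applies directly only when ${\rm gcd}(v,p^m-1)<p$, but here ${\rm gcd}(v,p^m-1)$ can be as large as ${\rm gcd}(p^h+1,p^m-1)=p^{{\rm gcd}(h,m)}+1$ when $m/{\rm gcd}(h,m)$ is even, so one must argue separately (the paper does this via Lemma~\ref{gongyinzi} and Lemma~\ref{gong}, showing that $|C_v|\le m/2$ would force $h=l_v=m/2$ and then deriving a contradiction from $p^{m/2}+1\mid v$ and ${\rm gcd}(p^k+1,p^m-1)\ge 2$). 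For (b), one must rule out $v\equiv p^iu\pmod{p^m-1}$; multiplying by $2(p^k+1)$ and a size estimate does it, but it is a step that has to be written down. Without (a) and (b) the stated parameters $[p^m-1,p^m-2m-2,4]$ are not established, so this part needs to be supplied before the proof is complete. A second, much smaller point: the passage from $C_p(0,1,u^{-1}v)$ (or $C_p(0,1,uv^{-1})$) to $C_p(0,u,v)$ rests on a multiplier equivalence, which is also where the hypothesis ``$p^m\equiv1\pmod 4$ or ${\rm gcd}(v,p^m-1)=1$'' earns its keep (it guarantees that $u^{-1}$, respectively $v^{-1}$, exists modulo $p^m-1$); you use this equivalence implicitly in Step 1 and it deserves a sentence.
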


\begin{proof}By the theory of multiplies of cyclic codes, $C_p(0,1,u^{-1}v)$ or $C_p(0,1,uv^{-1})$ is equivalent to $C_p(0,u,v)$. Thus the proof is divided into three steps.

\emph{Step 1.} We prove that $v\notin C_u$. Denote $i_m$ the residue of $i$ modulo $m$, where $0\leq i_m< m$.
If $v\in C_u$, then there exists $0\leq i<m$ such that $v\equiv p^iu\,({\rm mod}\,p^m-1)$ by the definition of  cyclotomic coset. Multiplying both sides of this equation with $2(p^k+1)$ yields
\begin{equation}\label{uv}
2p^{(i+h)_m}+2p^i-2p^h-2\equiv0\,({\rm mod}\,p^m-1).
\end{equation} Since $p\geq 5$,  $2p^{(i+h)_m}+2p^i-2p^h-2\leq 4p^{m-1}-4<p^m-4<p^m-1$. This is contrary to \eqref{uv}. Hence $v\notin C_u$.

\emph{Step 2.} We prove that $|C_v|=m$. When $\frac{m}{{\rm gcd}(m,h)}$ is odd, ${\rm gcd}(p^h+1,p^m-1)=2$ by Lemma \ref{gongyinzi}. Thus ${\rm gcd}(v,p^m-1)\leq {\rm gcd}(v(p^k+1),p^m-1)={\rm gcd}(p^h+1,p^m-1)=2<p$. This leads to $|C_v|=m$ by Lemma \ref{lem1}. When $\frac{m}{{\rm gcd}(m,h)}$ is even, suppose that $|C_v|\neq m$ and let $|C_v|=l_v$. This together with Lemma \ref{gong} gives that $l_v\leq \frac{m}{2}$. By the definition of  cyclotomic coset, $v(p^{l_v}-1)\equiv 0\,({\rm mod}\,p^m-1)$. Thus
\begin{equation*}\label{5}\begin{array}{rcl} p^m-1={\rm gcd}(v(p^{l_v}-1),p^m-1)&\mid &
{\rm gcd}(v,p^m-1){\rm gcd}(p^{l_v}-1,p^m-1)\\
&\leq & {\rm gcd}(v(p^k+1),p^m-1){\rm gcd}(p^{l_v}-1,p^m-1)\\
&=& {\rm gcd}(p^h+1,p^m-1)(p^{{\rm gcd}(l_v,m)}-1)\\
&=&(p^{{\rm gcd}(h,m)}+1)(p^{{\rm gcd}(l_v,m)}-1)\\
&\leq& (p^{\frac{m}{2}}+1)(p^{\frac{m}{2}}-1).\\
\end{array}\end{equation*} This implies that  $h=\frac{m}{2}$ and  $l_v=\frac{m}{2}$. Hence ${\gcd}(v(p^k+1),p^m-1)={\gcd}(p^h+1,p^m-1)=p^{\frac{m}{2}}+1$ and $v\equiv 0\,({\rm mod}\,p^\frac{m}{2}+1)$. Thus ${\gcd}(\frac{v}{p^{\frac{m}{2}}+1}(p^k+1),p^m-1)=1.$ This is impossible since $p^\frac{m}{2}+1\mid v$ and ${\gcd}(p^k+1,p^m-1)\geq 2.$ Hence $|C_v|=m$.


\emph{Step 3.} We prove that  $C_p(0,u,v)$ does not have a codeword of Hamming weight 2 or 3. Since $p^m\equiv 1 \,({\rm mod}\,4)$ or ${\rm gcd}( v,p^m-1)=1$, $C_p(0,u,v)$ has no codeword of Hamming weight 2 due to Lemma \ref{weight2}. By the foregoing discussions, $C_p(0,u,v)$ has no codeword of Hamming weight 3 if and only if \eqref{3'} has no distinct solutions $x,\,y\in \mathbb{F}_{p^m}^*\backslash \{1\}$ for any $b_1,b_2\in \mathbb{F}_p^*\backslash\{-1\}$.  Without loss of generality, we will discuss \eqref{3'} by the following two cases due to symmetry.

\emph{Case 1.}  $(\eta(x),\eta(y))=(1,1)$: In this case, \eqref{3'} becomes
\begin{equation}\label{6}
\begin{array}{ll}\left\{\begin{array}{lcr}
1+b_1+b_2&=&0\\
x+b_1+b_2y&=&0\\
x^v+b_1+b_2y^v&=&0,
\end{array}\right.
\end{array}
\end{equation}where $x\neq y$, $x,y\in \mathbb{F}_{p^m}^*\backslash\{1\}$ and $b_1, b_2\in \mathbb{F}_{p}^*\backslash\{-1\}$. From the second and third equations  of \eqref{6}, one has that
\begin{equation}\label{7}
(-(b_1+b_2y))^v=-(b_1+b_2y^v).
\end{equation}Taking $(p^k+1)$-th power on both sides of \eqref{7}, one obtains that $(b_1+b_2y)^{v(p^k+1)}=(b_1+b_2y^v)^{p^k+1}$. Plugging \eqref{4} into it gives that
$y(y^{p^h-v}-1)(y^{v-1}-1)=0$. Therefore $y^{p^h-v}=1$ or $y^{v-1}=1$. Since  ${\rm gcd}(h+k,m)=1$ and ${\rm gcd}(h-k,m)=1$, ${\rm gcd}((p^h-v)(p^k+1), p^m-1)={\rm gcd}(p^{h+k}-1, p^m-1)=p^{{\rm gcd}(h+k,m)}-1=p-1$  and ${\rm gcd}((v-1)(p^k+1), p^m-1)={\rm gcd}(p^{h-k}-1, p^m-1)=p^{{\rm gcd}(h-k,m)}-1=p-1$. Thus
${\rm gcd}(p^h-v, p^m-1)|p-1$ and ${\rm gcd}(v-1, p^m-1)|p-1$. And thus $y^{p-1}=1$, i.e., $y\in \mathbb{F}_p^*$. By the second equation of \eqref{6}, we have $x\in \mathbb{F}_p^*$.

\emph{Case 2.} $(\eta(x),\eta(y))=(1,-1)$: In this case, \eqref{3'} can be written as
\begin{equation}\label{8}
\begin{array}{ll}\left\{\begin{array}{lcr}
1+b_1+b_2&=&0\\
x+b_1-b_2y&=&0\\
x^v+b_1+b_2y^v&=&0,
\end{array}\right.
\end{array}
\end{equation}where $x\neq y$, $x,y\in \mathbb{F}_{p^m}^*\backslash\{1\}$ and $b_1, b_2\in \mathbb{F}_{p}^*\backslash\{-1\}$. Similar as in case 1, in this case \eqref{8} is reduced to $y(y^{p^h-v}+1)(y^{v-1}+1)=0$. Thus $y^{2(p^h-v)}=1$ or $y^{2(v-1)}=1$. Since ${\rm gcd}(h+k,m)=1$ and  ${\rm gcd}(h-k,m)=1$, ${\rm gcd}(2(p^h-v)\cdot \frac{p^k+1}{2},p^m-1)={\rm gcd}(p^{h+k}-1,p^m-1)=p-1$ and ${\rm gcd}(2(v-1)\cdot \frac{p^k+1}{2},p^m-1)={\rm gcd}(p^{h-k}-1,p^m-1)=p-1$.
Hence $y^{p-1}=1$, i.e., $y\in \mathbb{F}_p^*.$ By the second equation of \eqref{8}, $x\in \mathbb{F}_p^*$.

To sum up, $C_p(0,1,u^{-1}v)$ or $C_p(0,1,uv^{-1})$ has no codeword of Hamming weight 3 if and only if \eqref{weight3} has no distinct solutions $x,\,y\in \mathbb{F}_p^*\backslash \{1\}$ for any $b_1, b_2\in \mathbb{F}_p^*\backslash\{-1\}$. Then the desired result follows immediately. This completes the proof.
\end{proof}


\begin{cor}\label{cor2} Let $p=5$, $m$ be a positive integer and $u=\frac{p^m+1}{2}$. Let $v$ satisfy \eqref{4} with ${\rm gcd}(h-k,m)=1$ and ${\rm gcd}(h+k,m)=1$. Then $v\equiv 1,\frac{p+1}{2}\,({\rm mod}\,p-1)$. The $p$-ary cyclic code $C_p(0,1,u^{-1}v)$ or $C_p(0,1,uv^{-1})$  has parameters $[p^m-1,p^m-2m-2,4]$ if

\noindent 1) $v\equiv 1\,({\rm mod}\,p-1)$ and $m$ is odd; or

\noindent 2) $v\equiv 3\,({\rm mod}\,p-1)$ and $m$ is even.
\end{cor}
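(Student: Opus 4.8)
The plan is to specialize Theorem \ref{thm2} to the case $p=5$ and verify that under the stated hypotheses the two systems in \eqref{weight3} have no admissible distinct solutions $x,y\in\mathbb F_5^*\setminus\{1\}=\{2,3,4\}$. First I would settle the preliminary claim that $v\equiv 1$ or $v\equiv\frac{p+1}{2}\pmod{p-1}$: from \eqref{4} we have $v(p^k+1)\equiv p^h+1\pmod{p^m-1}$, hence modulo $p-1$ (which divides $p^m-1$) this reads $2v\equiv 2\pmod{p-1}$, so $v\equiv 1\pmod{\frac{p-1}{2}}$, giving the two possibilities $v\equiv 1$ or $v\equiv 1+\frac{p-1}{2}=\frac{p+1}{2}\pmod{p-1}$. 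For $p=5$ these are $v\equiv 1$ and $v\equiv 3\pmod 4$. Note also that for $p=5$ the condition ``$p^m\equiv 1\pmod 4$'' always holds, so the Hamming-weight-2 obstruction is automatically excluded and I only need to rule out weight-3 codewords via \eqref{weight3}.

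Next I would exploit that for $x\in\mathbb F_5^*$ the exponentiation $x\mapsto x^v$ depends only on $v\bmod (p-1)=v\bmod 4$, since $x^4=1$. Thus in case 1) ($v\equiv 1\pmod 4$) the third equation $x^v+b_1+b_2y^v=0$ in both (i) and (ii) collapses to $x+b_1+b_2y^v=0$, and in (i) the second equation is $x+b_1+b_2y=0$; subtracting gives $b_2(y-y^v)=0$, i.e. $y=y^v$ — but with $v\equiv 1\pmod 4$ this is automatic, so (i) reduces to: do there exist distinct $x,y\in\{2,3,4\}$ with $\eta(x)=\eta(y)=1$ and $1+b_1+b_2=0$, $x+b_1+b_2y=0$ solvable for $b_1,b_2\in\mathbb F_5^*\setminus\{-1\}$? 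Here the parity of $m$ enters through $\eta$ restricted to $\mathbb F_5^*$: the squares of $\mathbb F_{5^m}$ intersected with $\mathbb F_5^*$ are $\{1,4\}$ when $m$ is odd and all of $\mathbb F_5^*$ when $m$ is even (because $-1=4$ is a nonsquare in $\mathbb F_5$ but a square in $\mathbb F_{25}$, equivalently $\eta|_{\mathbb F_5^*}$ is the quadratic character of $\mathbb F_5$ iff $m$ is odd). So when $m$ is odd, $\eta(x)=1$ with $x\in\{2,3,4\}$ forces $x=4$, and likewise $y=4$, contradicting $x\neq y$; hence (i) is infeasible. For (ii) the analogous reduction gives $b_2(y+y^v)=0$, i.e. $2y=0$ (using $v\equiv1$), impossible in $\mathbb F_5^*$; so (ii) is infeasible as well, and part 1) follows.

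For case 2) ($v\equiv 3\pmod 4$, $m$ even), since $m$ is even every element of $\mathbb F_5^*$ is a square in $\mathbb F_{5^m}$, so $\eta(x)=\eta(y)=1$ is satisfied by all of $\{2,3,4\}$ and system (ii), which requires $\eta(y)=-1$, has no admissible $y$ at all; it is vacuously infeasible. It remains to handle (i): with $v\equiv 3\pmod 4$ one has $x^v=x^{-1}$ for $x\in\mathbb F_5^*$, so (i) becomes $1+b_1+b_2=0$, $x+b_1+b_2y=0$, $x^{-1}+b_1+b_2y^{-1}=0$, with $x\neq y$ in $\{2,3,4\}$ and $b_1,b_2\in\{1,2,3\}$ (excluding $-1=4$). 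Eliminating $b_1=-1-b_2$ yields $x-1+b_2(y-1)=0$ and $x^{-1}-1+b_2(y^{-1}-1)=0$; from the first $b_2=-(x-1)(y-1)^{-1}$, and substituting into the second gives a polynomial identity in $x,y$ over $\mathbb F_5$ that I would expand to $(x-1)(y^{-1}-1)=(x^{-1}-1)(y-1)$, which simplifies (multiplying through by $xy$) to $(x-y)(xy-1)\cdot(\text{unit})=0$; since $x\neq y$ this forces $xy=1$ in $\mathbb F_5$, i.e. $\{x,y\}=\{2,3\}$. Then one checks the resulting $b_2=-(x-1)(y-1)^{-1}$ directly for $(x,y)=(2,3)$ and $(3,2)$: in each case $b_2$ turns out to equal $4=-1$, which is excluded, so (i) has no admissible solution. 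Hence $C_p(0,1,u^{-1}v)$ or $C_p(0,1,uv^{-1})$ has no codeword of weight $2$ or $3$, and being a subcode of length $5^m-1$, dimension $5^m-2m-2$ (the cosets $C_0,C_u,C_v$ are disjoint with $|C_0|=1$, $|C_u|=|C_v|=m$ by Steps 1 and 2 of Theorem \ref{thm2}), it meets the Sphere Packing bound with equality, i.e. has parameters $[5^m-1,5^m-2m-2,4]$.

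\textbf{Main obstacle.} The routine-but-delicate part is the final elimination in case 2): correctly tracking the $\mathbb F_5$-arithmetic so that the system (i) reduces to $xy=1$ and then confirming that the forced value of $b_2$ lands in the excluded set $\{-1\}$ (equivalently $\{0,-1\}$, but $b_2\neq0$ is built in). I would double-check this small finite computation exhaustively over the nine pairs $(x,y)\in\{2,3,4\}^2$ with $x\neq y$ rather than rely on the factorization shortcut, and I would also be careful that the hypothesis ${\rm gcd}(h\pm k,m)=1$ in Corollary \ref{cor2} is exactly what guarantees (via Theorem \ref{thm2}) that the only solutions of \eqref{3'} in $\mathbb F_{5^m}$ already lie in $\mathbb F_5^*$, so that restricting to $\mathbb F_5$ loses nothing. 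The parity-of-$m$ dichotomy for $\eta|_{\mathbb F_5^*}$ is the conceptual crux linking the two items of the corollary to the two cases of \eqref{weight3}.
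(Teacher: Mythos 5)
Your proposal follows essentially the same route as the paper: derive $2v\equiv 2\pmod{p-1}$ from \eqref{4}, invoke Theorem \ref{thm2}, and split on the parity of $m$ exactly as the paper does (for odd $m$ the squares of $\mathbb F_{5^m}$ meet $\mathbb F_5^*$ in $\{1,4\}$, killing system (i) and reducing (ii) to $b_2y=0$; for even $m$ all of $\mathbb F_5^*$ consists of squares, killing (ii), and one eliminates $b_1,b_2$ from (i)). The conclusion is correct, but your Case 2 algebra contains two slips. Substituting $b_2=-(x-1)(y-1)^{-1}$ into $x^{-1}-1+b_2(y^{-1}-1)=0$ and clearing denominators gives $(x-1)(y-1)(x-y)=0$, not $(x-y)(xy-1)\cdot(\text{unit})=0$; so the system is already contradictory from $x,y\neq 1$ and $x\neq y$, and no pair with $xy=1$ is singled out. (The paper's equivalent elimination, using $x^3$ and subtracting the second equation rather than the first, yields $(x-1)(x-y)(x+y+1)=0$, i.e.\ $x+y=4$, which likewise has no admissible distinct solutions in $\{2,3,4\}$.) Consequently your follow-up claim that $(x,y)\in\{(2,3),(3,2)\}$ forces $b_2=-1$ is also false: those pairs give $b_2=2$ and $b_2=3$ respectively, and they are excluded not by the range of $b_2$ but by the third equation itself. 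Since you flag this step for exhaustive verification over the nine pairs, the error is self-correcting and the proof goes through once the factorization is fixed.
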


\begin{proof} We first prove $v\equiv 1,\frac{p+1}{2}\,({\rm mod}\,p-1)$. Since $v(p^k+1)\equiv p^h+1\,({\rm mod}\,p^m-1)$, $v(p^k+1)\equiv p^h+1\,({\rm mod}\,p-1)$. This leads to $v(p^k-1)-(p^h-1)+2(v-1)\equiv 0\,({\rm mod}\,p-1)$.
Thus  $2v\equiv 2\,({\rm mod}\,p-1)$. And thus  $v\equiv 1,\frac{p+1}{2}\,({\rm mod}\,p-1)$.
By Theorem \ref{thm2}, one needs to prove that \eqref{weight3} has no distinct solutions $x,y$ in $\mathbb{F}_5^*\backslash \{1\}$ for any $b_1,b_2\in \mathbb{F}_5^*\backslash\{-1\}$.

\emph{Case 1.} $v\equiv 1\,({\rm mod}\,4)$ and $m$ is odd: In this case, $\eta(\pm 1)=1$ and $\eta(\pm 2)=-1$. Thus the system (i) of \eqref{weight3} has solutions $x=-1$ and $y=-1$ in $\mathbb{F}_5^*\backslash \{1\}$ since $\eta(x)=\eta(y)=1$. This is contrary to that $x\neq y$. As a result, the system (i) of \eqref{weight3} has no distinct solutions $x,y\in\mathbb{F}_5^*\backslash \{1\}$ for any $b_1,b_2\in \mathbb{F}_5^*\backslash\{-1\}$.
Since $v\equiv 1\,({\rm mod}\,4)$ and $x,y\in\mathbb{F}_5^*\backslash \{1\}$, $x^v=x$ and $y^v=y$. The system (ii) of \eqref{weight3} becomes
\begin{equation}\label{110}\begin{array}{ll}\left\{\begin{array}{lcr}
1+b_1+b_2&=&0\\
x+b_1-b_2y&=&0\\
x+b_1+b_2y&=&0\\
\eta(x)=1,\,\eta(y)&=&-1.
\end{array}\right.
\end{array}\end{equation} From the second and third equations of \eqref{110} we have that  $b_2y=0$.  This is contrary to that $b_2\neq 0$ and $y\neq 0$. Hence the system (ii) of \eqref{weight3} has no distinct solutions $x,y\in\mathbb{F}_5^*\backslash \{1\}$ for any $b_1,b_2\in \mathbb{F}_5^*\backslash\{-1\}$.

\emph{Case 2.} $v\equiv 3\,({\rm mod}\,4)$ and $m$ is even: In this case, all elements in $\mathbb{F}_5^*$ are squares since $\mathbb{F}_5^*=\{\alpha^{\frac{5^m-1}{5-1}i}: i=0,1,2,3\}$ and $\frac{5^m-1}{5-1}$ is even. Therefore the  system (ii) of \eqref{weight3} has no solution $y\in\mathbb{F}_5^*$ such that $\eta(y)=-1$. The system (i) of \eqref{weight3} can be written as
\begin{equation*}\label{111}\begin{array}{ll}\left\{\begin{array}{lcr}
1+b_1+b_2&=&0\\
x+b_1+b_2y&=&0\\
x^3+b_1+b_2y^3&=&0\\
\eta(x)=1,\,\eta(y)&=&1,
\end{array}\right.
\end{array}\end{equation*}  which can be reduced to
\begin{equation}\label{12'}\begin{array}{ll}\left\{\begin{array}{lcr}
(x-1)+b_2(y-1)&=&0\\
(x^3-x)+b_2(y^3-y)&=&0.
\end{array}\right.
\end{array}\end{equation} The first equation of \eqref{12'} gives that $b_2=-\frac{x-1}{y-1}$. Replacing $b_2$ in the second equation of \eqref{12'} by $-\frac{x-1}{y-1}$ yields that $x+y=4$. Hence $x=y=2$ since $x,y\in \mathbb{F}_5^*\backslash \{1\}$. This is contrary to that $x\neq y$. As a result, the system (i) of \eqref{weight3} has no distinct solutions $x,y\in\mathbb{F}_5^*\backslash \{1\}$ for any $b_1,b_2\in \mathbb{F}_5^*\backslash\{-1\}$.
Then the desired result follows immediately. This completes the proof.
\end{proof}

\begin{cor}\label{cor3} Let $p\geq7$ be an odd prime and $m$ be a positive integer. Let $u=\frac{p^m+1}{2}$ and $v$ be defined by \eqref{4}. Then  the $p$-ary cyclic code $C_p(0,1,u^{-1}v)$ or $C_p(0,1,uv^{-1})$ has parameters $[p^m-1,p^m-2m-2,3]$.
\end{cor}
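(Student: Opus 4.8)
The plan is to observe that almost all of the claim is inherited directly from the proof of Theorem~\ref{thm2}, and that the only genuinely new point is the \emph{existence} of a codeword of Hamming weight exactly $3$ — which for $p\geq 7$ always holds, in contrast with the case $p=5$ treated in Corollary~\ref{cor2}. First I would recall that $C_p(0,1,u^{-1}v)$ and $C_p(0,1,uv^{-1})$ are each equivalent to $C_p(0,u,v)$, and that Steps~1 and~2 of the proof of Theorem~\ref{thm2} (which use only $p\geq 5$, not the coprimality hypotheses) already give $v\notin C_u$ and $|C_v|=m$; hence the generator polynomial has degree $2m+1$ and the code has length $p^m-1$ and dimension $p^m-2m-2$. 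Under the hypothesis of Theorem~\ref{thm2} that $p^m\equiv 1\,({\rm mod}\,4)$ or ${\rm gcd}(v,p^m-1)=1$, Lemma~\ref{weight2} excludes codewords of weight $2$, and since $1$ is a zero of the code there are no codewords of weight $1$; so $d\geq 3$ and it only remains to show $d\leq 3$.

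For $d\leq 3$ it suffices, by Step~3 of the proof of Theorem~\ref{thm2}, to exhibit distinct $x,y\in\mathbb{F}_p^{*}\setminus\{1\}$ and $b_1,b_2\in\mathbb{F}_p^{*}\setminus\{-1\}$ satisfying system~$(i)$ of~\eqref{weight3}. Reducing~\eqref{4} modulo $p-1$ (where $p\equiv 1$) gives $2v\equiv 2\,({\rm mod}\,p-1)$, so $\tfrac{p-1}{2}\mid v-1$; consequently every quadratic residue $x$ of $\mathbb{F}_p$ satisfies $x^{(p-1)/2}=1$ and hence $x^{v}=x$ in $\mathbb{F}_{p^m}$, while also being a square in $\mathbb{F}_{p^m}$ (it is a square already in the subfield $\mathbb{F}_p$), so that $\eta(x)=1$ and $x^{u}=x$ as well. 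For such $x,y$ the system~$(i)$ therefore collapses to the two linear equations $1+b_1+b_2=0$ and $x+b_1+b_2y=0$, its third equation becoming a repeat of the second.

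Finally I would invoke $p\geq 7$: the quadratic residues of $\mathbb{F}_p$ other than $1$ number $\tfrac{p-1}{2}-1=\tfrac{p-3}{2}\geq 2$, so I can choose two distinct such $x,y$, put $b_2=-\tfrac{x-1}{y-1}$ and $b_1=-1-b_2=\tfrac{x-y}{y-1}$, and check — using $x\neq 1$, $y\neq 1$, $x\neq y$ — that $b_1,b_2\in\mathbb{F}_p^{*}\setminus\{-1\}$. This $(x,y,b_1,b_2)$ then satisfies system~$(i)$ of~\eqref{weight3}, so $C_p(0,u,v)$ has a codeword of Hamming weight $3$ and $d=3$. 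There is no real computational obstacle here; the delicate point is purely the counting step that makes the construction non-vacuous — $p\geq 7$ is exactly what guarantees two quadratic residues distinct from $1$, whereas for $p=5$ there is only one, which is precisely why the $p=5$ code of Corollary~\ref{cor2} can still be optimal. I would also note that, as in Steps~1–2 of Theorem~\ref{thm2}, none of this uses ${\rm gcd}(h\pm k,m)=1$, so the corollary shows that the construction of Theorem~\ref{thm2} genuinely fails to be optimal once $p\geq 7$.
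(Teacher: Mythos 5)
Your proposal is correct and follows essentially the same route as the paper's proof: both exhibit a solution of system (i) of \eqref{weight3} by choosing two distinct squares $x,y\in\mathbb{F}_p^*\setminus\{1\}$ (which exist precisely because $p\geq 7$) and then solving the remaining linear system to get $b_1=\frac{x-y}{y-1}$, $b_2=\frac{1-x}{y-1}\in\mathbb{F}_p^*\setminus\{-1\}$. Your choice to work uniformly with quadratic residues of the prime subfield $\mathbb{F}_p$ (so that $\frac{p-1}{2}\mid v-1$ directly forces $x^{v}=x$) is in fact slightly cleaner than the paper's split on the parity of $m$, since for even $m$ the paper's identity $x^{v}=\eta(x)x=x$ is only guaranteed for $x$ that are squares already in $\mathbb{F}_p$, not for all of $S=\mathbb{F}_p^*\setminus\{1\}$.
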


\begin{proof} By Corollary \ref{cor2}, $v\equiv 1\,({\rm mod}\,p-1)$ or $v\equiv \frac{p+1}{2}\,({\rm mod}\,p-1)$. We claim that  the system (i) of \eqref{weight3} has two distinct  solutions  in $\mathbb{F}_p^*\backslash \{1\}$.
Let $S=\{z\in \mathbb F_p^*: \eta(z)=1, z\neq 1\}$. Since $\mathbb{F}_p^*=\{\alpha^{\frac{p^m-1}{p-1}i}: 0\leq i\leq p-2\}$ and $p\geq 7$,
$|S|=\frac{p-1}{2}-1\geq 2$ for odd $m$ and $|S|=p-2\geq 5$ for even $m$. Then there exist two distinct elements $x,y \in S$.  If $v\equiv 1\,({\rm mod}\,p-1)$, then $x^v=x$. If $v\equiv \frac{p+1}{2}\,({\rm mod}\,p-1)$, then $x^v=\eta(x)x=x$.
Hence
the system (i) of \eqref{weight3} becomes
\begin{equation}\label{v=1}
\begin{array}{ll}\left\{\begin{array}{lcr}
b_1+b_2&=&-1\\
b_1+b_2y&=&-x,
\end{array}\right.
\end{array}
\end{equation} where $x,y\in\mathbb{F}_p^*\backslash \{1\}$ and $\eta(x)=\eta(y)=1$.
Hence \eqref{v=1} has  solutions $b_1=\frac{x-y}{y-1}$, $b_2=\frac{1-x}{y-1}$ in $\mathbb F_p^*\backslash \{-1\}$.
Therefore
 there exist two elements $b_1,b_2\in \mathbb F_p^*\backslash \{-1\}$ such that the system (i) of \eqref{weight3} has two distinct solutions  $x,y \in\mathbb F_p^*\backslash \{1\}$. Then the desired result follows immediately from Theorem \ref{thm2}.
This completes the proof.
\end{proof}


%
\begin{example} Let  $m=4$,  $h=0$ and $k=1$. Let $u=\frac{5^m+1}{2}$ and $v(5^k+1)\equiv 5^h+1\,({\rm mod}\,5^m-1)$. Then $u=313$ and $v=315$. Thus $u^{-1}v=3$.  Let $\alpha$ be the generator of $\mathbb F_{5^m}^*$ with $\alpha^4+4\alpha^2+4\alpha+2=0$. Then $C_5(0,1,3)$ has parameters $[624,615,4]$  and generator polynomial $x^9 + x^8 + 2x^5 + 2x^3 + 3x^2 + 2x + 4$.
\end{example}

\begin{example} Let  $m=5$, $h=1$ and $k=2$. Let $u=\frac{5^m+1}{2}$ and $v(5^k+1)\equiv 5^h+1\,({\rm mod}\,5^m-1)$. Then $u=1563$ and $v=525$. Thus $u^{-1}v=2087$ and $uv^{-1}=2163$.
Let $\alpha$ be the generator of $\mathbb F_{5^m}^*$ with $\alpha^5+4\alpha+3=0$. Then $C_5(0,1,u^{-1}v)$ or $C_5(0,1,uv^{-1})$ has parameters $[3124,3113,4]$  and generator polynomial $x^{11} + x^{10} + 3x^9 + 2x^8 + x^7 + 3x^5 + 2x^4 + 4x^3 + 3x^2 + x + 4$ or $x^{11} + 2x^{10} + 2x^9 + 2x^8 +x^7+ 4x^6 + 2x^5 + 4x^4 + 4x^3 + 3x^2 + x + 4$.
\end{example}

\begin{remark}\label{remark1} In 2020, Liu and Cao \cite{Liu1} proved that $C_5(0,1,v)$ has parameters $[5^m-1,5^m-2m-2,4]$, where $v(5^k+1)\equiv 5^h+1\,({\rm mod}\,5^m-1)$ with $v\equiv 3\,({\rm mod}\,4)$ and ${\rm gcd}(m,h+k)={\rm gcd}(m,h-k)=1$. This is the special case $p=5$ of Theorem \ref{thm2}.
\end{remark}

%

\subsection{ The second class of optimal $p$-ary cyclic codes  $C_p(0,1,u^{-1}v)$ or $C_p(0,1,uv^{-1})$ with parameters $[p^m-1,p^m-2m-2,4]$}
In this subsection,
by  reducing the solutions of equations in $\mathbb F_{p^m}$ to $\mathbb F_{p}$, an infinite class of optimal $p$-ary cyclic codes with parameters $[p^m-1,p^m-2m-2,4]$ will be obtained from the exponent $v$ of the form
\begin{equation}\label{14}
v(p^k-1)\equiv p^h-1\,({\rm mod}\,p^m-1), \,\,0\leq k,h<m.
\end{equation}
The
following theorem can be proved by the same approach as the proof of Theorem \ref{thm2}. So we omit the proof of the following theorem.
\begin{theorem} \label{thm3}  Let $p\geq 5$ be an odd prime, $m$ be an odd integer and $u=\frac{p^m+1}{2}$. Let $v$  be  defined by \eqref{14} such that  ${\rm gcd}(h,m)=1$ and ${\rm gcd}(h-k,m)=1$. Let $p^m\equiv 1 \,({\rm mod}\,4)$ or  ${\rm gcd}( v,p^m-1)=1$. Then the $p$-ary cyclic code $C_p(0,1,u^{-1}v)$ or $C_p(0,1,uv^{-1})$ has parameters $[p^m-1,p^m-2m-2,4]$  if and only if \eqref{weight3} has no distinct solutions $x,y\in\mathbb{F}_p^*\backslash \{1\}$ for  $b_1,b_2\in \mathbb{F}_p^*\backslash\{-1\}$.
\end{theorem}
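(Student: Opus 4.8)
The plan is to reproduce, step by step, the three-part argument used for Theorem~\ref{thm2}, replacing $p^k+1$ and $p^h+1$ by $p^k-1$ and $p^h-1$ throughout. Recall from Section~2 that $C_p(0,1,u^{-1}v)$ and $C_p(0,1,uv^{-1})$ are both equivalent, via the theory of multipliers, to $C_p(0,u,v)$, and that $C_p(0,u,v)$ has no codeword of Hamming weight $3$ if and only if \eqref{3'} has no distinct solutions $x,y\in\mathbb{F}_{p^m}^*\setminus\{1\}$ with $b_1,b_2\in\mathbb{F}_p^*\setminus\{-1\}$. So the work splits into: (Step~1) $v\notin C_u$; (Step~2) $|C_v|=m$; (Step~3) every solution of \eqref{3'} in fact has $x,y\in\mathbb{F}_p^*$. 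Steps~1 and~2, together with $|C_u|=m$ from Lemma~\ref{lem2}, show the generator polynomial $m_{\alpha^0}(x)m_{\alpha^u}(x)m_{\alpha^v}(x)$ has degree $1+m+m=2m+1$, hence dimension $p^m-2m-2$; the hypothesis ``$p^m\equiv1\,({\rm mod}\,4)$ or $\gcd(v,p^m-1)=1$'' kills weight-$2$ codewords by Lemma~\ref{weight2}, weight-$1$ codewords are impossible since $x-1\mid g(x)$, and then $d=4$ (rather than merely $d\ge4$) follows from the sphere-packing bound.

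For Step~1 I would suppose $v\equiv p^iu\,({\rm mod}\,p^m-1)$ for some $0\le i<m$, multiply by $2(p^k-1)$, and use $2u\equiv2$ and \eqref{14} to get $2p^{(i+k)_m}-2p^i-2p^h+2\equiv0\,({\rm mod}\,p^m-1)$. Bounding the left-hand side in absolute value (for $p\ge5$ it lies strictly between $-(p^m-1)$ and $p^m-1$) forces $p^{(i+k)_m}+1=p^i+p^h$, and a short case analysis --- on whether $i=0$, using divisibility by $p$ together with $\gcd(h,m)=\gcd(h-k,m)=1$ --- yields a contradiction. Step~2 is cleaner here than in Theorem~\ref{thm2} because $m$ is odd: since $v\mid v(p^k-1)$ and $v(p^k-1)\equiv p^h-1$, Lemma~\ref{gongyinzi} gives $\gcd(v,p^m-1)\mid\gcd(p^h-1,p^m-1)=p^{\gcd(h,m)}-1=p-1<p$, so $|C_v|=m$ by Lemma~\ref{lem1}.

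Step~3 is the core. Using $z^u=\eta(z)z$, system \eqref{3'} splits by symmetry into the cases $(\eta(x),\eta(y))=(1,1)$ and $(\eta(x),\eta(y))=(1,-1)$, i.e.\ into \eqref{6} and \eqref{8}. From the last two equations of \eqref{6} one gets $\bigl(-(b_1+b_2y)\bigr)^v=-(b_1+b_2y^v)$; raising to the even power $p^k-1$, substituting \eqref{14}, and using that the Frobenius fixes $b_1,b_2\in\mathbb{F}_p$, I expect the identity to collapse to $y\,(y^{p^h-1}-1)(y^{v-1}-1)=0$, and likewise \eqref{8} to $y\,(y^{p^h-1}-1)(y^{v-1}+1)=0$. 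The factor $y^{p^h-1}-1$ is handled by $\gcd(p^h-1,p^m-1)=p^{\gcd(h,m)}-1=p-1$; for the other factor, $(v-1)(p^k-1)\equiv(p^h-1)-(p^k-1)=p^h-p^k$ gives $\gcd\bigl((v-1)(p^k-1),p^m-1\bigr)=p^{\gcd(h-k,m)}-1=p-1$, and (after also multiplying by the integer $\tfrac{p^k-1}{2}$ in the $y^{v-1}=-1$ branch) this forces $y^{p-1}=1$, i.e.\ $y\in\mathbb{F}_p^*$; then $x\in\mathbb{F}_p^*$ by the linear relation among $1,x,y$. Since $m$ is odd, $\eta$ restricted to $\mathbb{F}_p^*$ coincides with the quadratic character of $\mathbb{F}_p$, so the surviving weight-$3$ configurations are exactly the distinct solutions of \eqref{weight3} in $\mathbb{F}_p^*\setminus\{1\}$, giving the stated equivalence.

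The points needing care are bookkeeping, not conceptual. First, in Step~1 the quantity $2p^{(i+k)_m}-2p^i-2p^h+2$ can be negative, so one must bound its absolute value below $p^m-1$ rather than only bound it from above (the estimate written for Theorem~\ref{thm2} tacitly relies on the same fact). Second, the relation \eqref{14} involves $p^k-1$, which vanishes when $k=0$; but $\gcd(h,m)=\gcd(h-k,m)=1$ forces $k\ne0$ for $m>1$, so $\tfrac{p^k-1}{2}$ is a genuine positive integer and the divisions in Step~3 are legitimate. Modulo these remarks the computation is line-for-line that of Theorem~\ref{thm2}.
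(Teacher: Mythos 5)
Your proposal is correct and is essentially the proof the paper intends: the paper explicitly omits the argument for Theorem~\ref{thm3}, stating only that it follows ``by the same approach as the proof of Theorem \ref{thm2},'' and your three-step reconstruction (non-membership $v\notin C_u$, $|C_v|=m$ via $\gcd(v,p^m-1)\mid p^{\gcd(h,m)}-1=p-1$, and the collapse of \eqref{3'} to $y(y^{p^h-1}-1)(y^{v-1}\mp1)=0$ forcing $x,y\in\mathbb{F}_p^*$) is exactly that adaptation, carried out correctly. Your two cautionary remarks (the signed bound in Step~1 and $k\neq 0$) are sound refinements rather than deviations.
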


\begin{remark}\label{remark2} Two remarks about Theorem \ref{thm3} are the following:

\noindent 1)  If $m$ is even, then $h$ is odd and $k$ is even since ${\rm gcd}(h,m)={\rm gcd}(h-k,m)=1$. And then $p^m-1\equiv 0\,({\rm mod}\,2p-2)$ and $p^k-1\equiv 0\,({\rm mod}\,2p-2)$. This together with \eqref{14} leads to  $p^h-1\equiv 0\,({\rm mod}\,2p-2)$. This is impossible. This is the reason why let $m$ be odd in Theorem \ref{thm3}.

\noindent 2) In general, it is easier to determine the solution of \eqref{weight3} in $\mathbb{F}_p^*$ than in $\mathbb{F}_{p^m}^*$. We give two corollaries for the cases $p=5$ and $p=7$ as follows.
\end{remark}

\begin{cor}\label{cor4} Let $p=5$ and $m$ be odd. Let $u=\frac{p^m+1}{2}$ and  $v$ be defined by Theorem \ref{thm3}. The $p$-ary cyclic code $C_p(0,1,u^{-1}v)$ or $C_p(0,1,uv^{-1})$  has parameters $[p^m-1,p^m-2m-2,4]$ if and only if
$v\equiv 1\,({\rm mod}\,4)$ or
$v\equiv 2\,({\rm mod}\,4)$.
\end{cor}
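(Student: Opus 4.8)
The plan is to invoke Theorem \ref{thm3} directly: since $p=5$ and $m$ is odd, $C_5(0,1,u^{-1}v)$ (equivalently $C_5(0,1,uv^{-1})$) has parameters $[5^m-1,5^m-2m-2,4]$ precisely when the two systems (i) and (ii) of \eqref{weight3} have no distinct solutions $x,y\in\mathbb{F}_5^*\backslash\{1\}$ for any $b_1,b_2\in\mathbb{F}_5^*\backslash\{-1\}$. Here $\mathbb{F}_5^*\backslash\{1\}=\{2,3,4\}$ and $\mathbb{F}_5^*\backslash\{-1\}=\{1,2,3\}$, so this is a genuinely finite check once one knows how $x\mapsto x^v$ acts on $\{2,3,4\}$ and which of these are squares in $\mathbb{F}_{5^m}$. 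First I would record, exactly as in the proof of Corollary \ref{cor2}, that reducing \eqref{14} modulo $p-1=4$ forces $2v\equiv 0\,(\mathrm{mod}\,4)$, hence $v\equiv 0,1,2,3\,(\mathrm{mod}\,4)$; combining with $\gcd(v,p^m-1)=1$ or $p^m\equiv1\,(\mathrm{mod}\,4)$ and the parity constraints of Remark \ref{remark2}(1), the admissible residues narrow to $v\equiv1$ or $v\equiv2\,(\mathrm{mod}\,4)$ on the one hand (these should give optimal codes), versus $v\equiv0$ or $v\equiv3\,(\mathrm{mod}\,4)$ on the other (these should fail). Since $m$ is odd, $\frac{5^m-1}{4}$ is odd, so $\mathbb{F}_5^*$ contains both squares and nonsquares of $\mathbb{F}_{5^m}$: one checks $\eta(1)=\eta(4)=1$ and $\eta(2)=\eta(3)=-1$, using that $2$ is a nonsquare mod $5$ and the norm argument $\eta(z)=z^{(5^m-1)/2}=(z^{(5-1)/2})^{(5^m-1)/(5-1)}$ with odd exponent $(5^m-1)/4$.

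The "if" direction splits by the residue of $v$ modulo $4$. When $v\equiv1\,(\mathrm{mod}\,4)$, every $x\in\mathbb{F}_5^*$ satisfies $x^v=x$, so system (i) has $x,y$ both squares in $\mathbb{F}_{5^m}$, i.e. $x,y\in\{1,4\}\cap(\mathbb{F}_5^*\backslash\{1\})=\{4\}$, forcing $x=y=4$, contradicting $x\neq y$; and in system (ii) the second and third equations become $x+b_1-b_2y=0$ and $x+b_1+b_2y=0$, giving $b_2y=0$, impossible. When $v\equiv2\,(\mathrm{mod}\,4)$, for $x\in\mathbb{F}_5^*$ we have $x^v=(x^2)^{v/2}$; since $x^2\in\{1,4\}$ and $4\equiv-1$ has order $4$ dividing $\gcd$ considerations, one computes $x^v=\eta_5(x)\cdot(\text{something})$—more concretely $2^v,3^v,4^v$ take values in $\{1,4\}$, so in system (i) again $\eta(x)=\eta(y)=1$ pins $x=y=4$, while in system (ii), where $\eta(y)=-1$ means $y\in\{2,3\}$, the third equation reduces to a relation already contradicting the second by the same $b_2y=0$-type collapse. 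I would carry out these three or four tiny linear eliminations explicitly to confirm no distinct solution survives.

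For the "only if" direction I would show that when $v\equiv0$ or $v\equiv3\,(\mathrm{mod}\,4)$ one *can* exhibit distinct $x,y$ and suitable $b_1,b_2$. Note $v\equiv0\,(\mathrm{mod}\,4)$ makes $x^v=1$ for all $x\in\mathbb{F}_5^*$, so system (i) reduces to $1+b_1+b_2=0$, $x+b_1+b_2y=0$, $1+b_1+b_2=0$—the third equation is vacuous—leaving the single linear constraint $x+b_1+b_2y=0$ with $x,y$ squares in $\mathbb{F}_{5^m}$; but the only square in $\mathbb{F}_5^*\backslash\{1\}$ is $4$, so system (i) still has no distinct solutions, and one must instead produce a solution of system (ii): with $\eta(x)=1$ (so $x=4$), $\eta(y)=-1$ (so $y\in\{2,3\}$), $v\equiv0$ gives $x^v=y^v=1$, and the system becomes $1+b_1+b_2=0$, $4+b_1-b_2y=0$, $1+b_1+b_2=0$, i.e. two independent equations in $b_1,b_2$ solvable with $b_1,b_2\in\mathbb{F}_5^*\backslash\{-1\}$—I would just pick $y=2$ and solve. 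The case $v\equiv3\,(\mathrm{mod}\,4)$ is analogous to Case 2 of Corollary \ref{cor2}'s argument run in reverse: $x^v=\eta(x)x^{-1}$-type behavior on $\mathbb{F}_5^*$ produces a genuine distinct pair. The main obstacle is purely bookkeeping: correctly tabulating $x^v$ on $\{2,3,4\}$ for each residue class of $v$ mod $4$ and keeping straight that $\eta$ refers to squares in $\mathbb{F}_{5^m}$ (not $\mathbb{F}_5$), since $m$ odd is exactly what makes $2,3$ nonsquares in the big field; once that table is fixed, every remaining step is a $2\times2$ linear solve over $\mathbb{F}_5$.
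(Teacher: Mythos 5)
Your proposal is correct and follows essentially the same route as the paper's proof: invoke Theorem \ref{thm3}, observe that for odd $m$ the only square in $\mathbb{F}_5^*\backslash\{1\}$ is $4$ (so system (i) of \eqref{weight3} never has distinct solutions), and then settle system (ii) by casework on $v\bmod 4$, with explicit counterexamples (e.g.\ $x=4$, $y=2$, $b_1=3$, $b_2=1$) for $v\equiv 0,3\,(\mathrm{mod}\,4)$ and linear-elimination contradictions for $v\equiv 1,2\,(\mathrm{mod}\,4)$. One non-load-bearing slip: reducing \eqref{14} modulo $4$ gives $0\equiv 0$ (since $5^k-1\equiv 0\,(\mathrm{mod}\,4)$), not $2v\equiv 0\,(\mathrm{mod}\,4)$, so no residue of $v$ is excluded a priori --- which is consistent with your subsequent four-case analysis.
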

\begin{proof} 
By Theorem \ref{thm3}, one needs to prove that \eqref{weight3} has no distinct solutions $x,y \in\mathbb F_5^*\backslash\{1\}$ for $b_1,b_2\in\mathbb F_5^*\backslash\{-1\}$ if and only if $v\equiv 1\,({\rm mod}\,4)$ or
$v\equiv 2\,({\rm mod}\,4)$. Since  $m$ is odd, $\eta(\pm1)=1$ and $\eta(\pm2)=-1$. If the system (i) of \eqref{weight3} has two solutions $x,y\in\mathbb{F}_5^*\backslash \{1\}$, then $x=y=-1$ since $\eta(x)=\eta(y)=1$. This is contrary to that $x\neq y$. Hence  the  system (i) of \eqref{weight3} has no distinct solutions $x,y\in\mathbb{F}_5^*\backslash \{1\}$ for $b_1,b_2\in \mathbb{F}_5^*\backslash\{-1\}$. We will discuss the solutions $x,y \in\mathbb F_5^*\backslash\{1\}$ of the system (ii) of  \eqref{weight3} by distinguishing among the following cases:

\emph{Case 1:} $v\equiv 0\,({\rm mod}\,4)$: In this case  $x^v=1$ and $y^v=1$. It is easy to check that the  system (ii) of \eqref{weight3} has a solution $x=-1$, $y=2$, $b_1=3$ and $b_2=1$.

\emph{Case 2:}  $v\equiv 1\,({\rm mod}\,4)$: In this case $x^v=x$ and $y^v=y$.
The second and third equations of the system (ii) of \eqref{weight3} is simplified to $b_2y=0$. Thus $b_2=0$ or $y=0$. This is contrary to that
$b_2\neq0$ and $y\neq0$. Hence the  system (ii) of \eqref{weight3} has no distinct solutions $x,y\in\mathbb{F}_5^*\backslash \{1\}$ for $b_1,b_2\in \mathbb{F}_5^*\backslash\{-1\}$.

\emph{Case 3:}  $v\equiv 2\,({\rm mod}\,4)$:
In this case  the  system (ii) of \eqref{weight3} is reduced to $(x-1)(x+y)=0$, which implies that $x=-y$ since $x\neq 1$. Thus $\eta (y)=\eta(-y)=\eta(x)=1.$ This is contrary to that $\eta(y)=-1$. Therefore the  system (ii) of \eqref{weight3} has no distinct solutions $x,y\in\mathbb{F}_5^*\backslash \{1\}$ for $b_1,b_2\in \mathbb{F}_5^*\backslash\{-1\}$.

\emph{Case 4:} $v\equiv 3\,({\rm mod}\,4)$: In this case   $x^v=x^3$ and $y^v=y^3$.
It is straightforward to check that the  system (ii) of  \eqref{weight3} has a solution $x=-1$, $y=2$, $b_1=3$ and $b_2=1$. This completes the proof.
\end{proof}

\begin{cor}\label{cor5} Let $p=7$ and  $m$ be odd. Let $u=\frac{p^m+1}{2}$ and $v$ be defined by Theorem \ref{thm3} with ${\rm gcd}(v,p^m-1)=1$. Then the $p$-ary cyclic code $C_p(0,1,u^{-1}v)$ or $C_p(0,1,uv^{-1})$ has parameters $[p^m-1,p^m-2m-2,4]$ if and only if $v\equiv 5\,({\rm mod}\,6)$.
\end{cor}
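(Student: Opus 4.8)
The plan is to invoke Theorem \ref{thm3} with $p=7$ and reduce the question to whether the two systems in \eqref{weight3} admit distinct solutions $x,y\in\mathbb F_7^*\setminus\{1\}$ for some $b_1,b_2\in\mathbb F_7^*\setminus\{-1\}$. First I would record the arithmetic of $\mathbb F_7$: since $m$ is odd, $\eta$ restricted to $\mathbb F_7^*$ is the Legendre symbol mod $7$, so the squares are $\{1,2,4\}$ and the nonsquares are $\{3,5,6\}$; in particular $S_+:=\{x\in\mathbb F_7^*\setminus\{1\}:\eta(x)=1\}=\{2,4\}$ and $S_-:=\{y\in\mathbb F_7^*:\eta(y)=-1\}=\{3,5,6\}$. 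I would also note, as in Corollary \ref{cor2}, that \eqref{14} forces a congruence on $v$ modulo $p-1=6$: reducing $v(7^k-1)\equiv 7^h-1\pmod{7^m-1}$ modulo $6$ gives $0\equiv 0$, which is vacuous, so instead I track $v$ modulo $6$ directly and use $\gcd(v,7^m-1)=1$ to exclude $v\equiv 0,2,3,4\pmod 6$; combined with the fact that $7\equiv 1\pmod 6$ this leaves only $v\equiv 1,5\pmod 6$. Hence it suffices to treat the two cases $v\equiv 1\pmod 6$ and $v\equiv 5\pmod 6$, where in each case $x^v$ for $x\in\mathbb F_7^*$ depends only on $v\bmod 6$.

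Next I would dispose of system (i) of \eqref{weight3}. With $x,y\in S_+=\{2,4\}$ and $x\neq y$ there is essentially one unordered pair, $\{x,y\}=\{2,4\}$; subtracting the first equation from the second gives $(x-1)+b_2(y-1)=0$ and subtracting it from the third gives $(x^v-1)+b_2(y^v-1)=0$, so consistency requires $(x-1)(y^v-1)=(y-1)(x^v-1)$. For $v\equiv 1\pmod 6$ this is automatic, so $b_2=-(x-1)/(y-1)$ and $b_1=-1-b_2$ give a solution provided $b_1,b_2\neq -1$; I would check on $\{x,y\}=\{2,4\}$ that indeed $b_1,b_2\in\mathbb F_7^*\setminus\{-1\}$, which shows that for $v\equiv 1\pmod 6$ system (i) \emph{does} have a distinct solution and hence the code has minimum distance $3$, not $4$. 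For $v\equiv 5\pmod 6$ one has $x^v=\eta(x)x^{-1}=x^{-1}$ on squares, so the constraint becomes $(x-1)(y^{-1}-1)=(y-1)(x^{-1}-1)$, i.e. $(x-1)(y-1)(x^{-1}-y^{-1})$-type identity; plugging $\{x,y\}=\{2,4\}$ I expect this to fail, ruling out system (i).

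Finally I would handle system (ii): here $x\in S_+=\{2,4\}$, $y\in S_-=\{3,5,6\}$, and from the last two equations (after eliminating via the first) one gets $(x-1)-b_2(y-1)\cdot(\text{sign adjustments})$ leading, as in Corollaries \ref{cor2} and \ref{cor4}, to a single polynomial identity in $x,y$ whose solvability I would test over the (at most) $2\times 3=6$ admissible pairs for each of $v\equiv 1,5\pmod 6$. The expected outcome is that for $v\equiv 5\pmod 6$ neither system is solvable (so by Theorem \ref{thm3} the code is optimal with parameters $[7^m-1,7^m-2m-2,4]$), while for $v\equiv 1\pmod 6$ system (i) is solvable (so the code has minimum distance $3$); together with the modular restriction $v\equiv 1,5\pmod 6$ this gives the stated ``if and only if''. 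The only real obstacle is bookkeeping: correctly carrying the $\eta(x)=1$, $\eta(y)=-1$ sign into the reduction of \eqref{3'} to system (ii) and then verifying the finitely many cases in $\mathbb F_7$ without error; there is no structural difficulty, since $v\bmod 6$ completely determines the exponent map on $\mathbb F_7^*$ and the search space is tiny.
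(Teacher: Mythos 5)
Your proposal follows essentially the same route as the paper: reduce via Theorem \ref{thm3} to the two systems in \eqref{weight3} over $\mathbb F_7$, use ${\rm gcd}(v,7^m-1)=1$ (together with $6\mid 7^m-1$) to restrict to $v\equiv 1,5\pmod 6$, exhibit the explicit solution $x=2$, $y=4$, $b_1=4$, $b_2=2$ of system (i) when $v\equiv 1\pmod 6$, and rule out both systems by a finite check when $v\equiv 5\pmod 6$. The verifications you leave as ``expected'' do check out (on $\mathbb F_7$ one has $\frac{x^5-1}{x-1}=3,5$ for $x=2,4$ and $\frac{y^5-1}{y+1}=1,5$ for $y=3,5$, so neither \eqref{15} nor \eqref{16} is solvable), and your gcd observation even spares you the paper's extra case $v\equiv 3\pmod 6$.
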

\begin{proof} 
According to Theorem \ref{thm3},  it suffices to show that \eqref{weight3} has no distinct solutions $x,y\in\mathbb{F}_7^*\backslash \{1\}$ for $b_1,b_2\in\mathbb{F}_7^*\backslash\{-1\}$ if and only if $v\equiv 5\,({\rm mod}\,6)$. Since $m$ is odd, $\eta(1)=\eta(2)=\eta(4)=1$ and $\eta(3)=\eta(5)=\eta(-1)=-1$. Note that $v$ is odd. We only need to consider  three cases $v\equiv 1\,({\rm mod}\,6)$, $v\equiv 3\,({\rm mod}\,6)$ and $v\equiv 5\,({\rm mod}\,6)$.

When $v\equiv 1\,({\rm mod}\,6)$ or $v\equiv 3\,({\rm mod}\,6)$, it is straightforward to check  that the system (i) of \eqref{weight3} has a solution $x=2$, $y=4$, $b_1=4$ and $b_2=2$.

When $v\equiv 5\,({\rm mod}\,6)$, let $x,y\in\mathbb F_7^*\backslash \{1\}$ such that $\eta(x)=\eta(y)=1$. Then the system (i) of \eqref{weight3} is reduced to
\begin{equation}\label{15}   \frac{x^5-1}{x-1}-\frac{y^5-1}{y-1}=0,\end{equation}
where $x\neq y$, $x,y\in\mathbb F_7^*\backslash \{1\}$ and $\eta(x)=\eta(y)=1$. Thus
$(x,y)=(2,4)$ or $(4,2)$. This is contrary to that $(x,y)=(2,4)$ or $(4,2)$ is not a solution of \eqref{15}. Hence the system (i) of \eqref{weight3}
has no distinct solutions $x,y\in\mathbb F_7^*\backslash \{1\}$ for $b_1,b_2\in\mathbb F_7^*\backslash \{-1\}$.
If $y=-1$, it then follows from the system (ii) of \eqref{weight3} that $x=1$. This is impossible since $x\in\mathbb F_7^*\backslash \{1\}$. Thus $y\neq -1$. And thus
the system (ii) of \eqref{weight3} is simplified to
\begin{equation}\label{16}\frac{x^5-1}{x-1} +\frac{y^5-1}{y+1}=0,\end{equation}
where $x\neq y$, $x,y\in\mathbb F_7^*\backslash \{1\}$, $\eta(x)=1$ and $\eta(y)=-1$.
Hence  $x\in\{2,4\}$ and $y\in\{3,5\}$. This is contrary to that $(x,y)\in \{2,4\}\times \{3,5\}$ are not solutions of \eqref{16}.
This completes the proof.
\end{proof}

\begin{remark}\label{remark3} 
In 2020, Liu and Cao \cite{Liu1} proved that $C_5(0,1,v)$ has parameters $[5^m-1,5^m-2m-2,4]$, where $v(5^k-1)\equiv 5^h-1\,({\rm mod}\,5^m-1)$, $v\equiv 2\,({\rm mod}\,4)$ or $v\equiv 0\,({\rm mod}\,4)$, and ${\rm gcd}(m,h)={\rm gcd}(m,k)={\rm gcd}(m,h-k)=1$.  This is the special case $p=5$ of Theorem \ref{thm3}  if $v\equiv 2\,({\rm mod}\,4)$.
\end{remark}

\begin{example} Two examples of the codes of Theorem \ref{thm3} are the following:

\noindent 1) Let $p=5$ and $m=3$. Let $u=\frac{5^m+1}{2}=63$ and  $\alpha$ be the generator of $\mathbb F_{5^m}^*$ with $\alpha^3+3\alpha+3=0$. If  $v=\frac{5^m-1}{4}+\frac{5^2-1}{4}=37$, then $u^{-1}v=99$ and $uv^{-1}=119$. The code  $C_5(0,1,u^{-1}v)$ or $C_5(0,1,uv^{-1})$ has parameters $[124,117,4]$  and generator polynomial $x^7 + 2x^5 + 3x^2 + 4$. If $v=\frac{5(5^m-1)+(5-1)}{24}=26$, then $u^{-1}v=26$.
The  code $C_5(0,1,u^{-1}v)$  has parameters $[124,117,4]$  and generator polynomial $x^7 + x^6 + x^5 + 2x^4 + x^3 + 2x^2 + 2$.

\noindent 2) Let $p=7$ and  $m=3$. Let $u=\frac{7^m+1}{2}=172$ and $\alpha$ be the generator of $\mathbb F_{7^m}^*$ with $\alpha^3+6\alpha+4=0$. Let $6v\equiv 48\,({\rm mod}\,7^m-1)$. Then $v=65,179,293$ and $uv^{-1}=50, 278,164$. The codes $C_7(0,1,uv^{-1})$   have parameters $[342,335,4]$  and generator polynomials $x^7 + 5x^6 + 4x^5 + 3x^4 + 3x^3 + 3x^2 + x + 1$, $x^7 + 5x^6 + 4x^4 + 6x^3 + x^2 + 3x + 1$ and $x^7 + 5x^6 + 6x^5 + 6x^4 + 5x^3 + 4x^2 + 1$.
\end{example}
%
%

\subsection{ The third class of optimal $p$-ary cyclic codes  $C_p(0,1,u^{-1}v)$ or $C_p(0,1,uv^{-1})$ with parameters $[p^m-1,p^m-2m-2,4]$}\label{subsection3}

In this subsection, new optimal $p$-ary cyclic codes $C_p(0,1,u^{-1}v)$ or $C_p(0,1,uv^{-1})$  will be obtained from the exponent $v$ of the form
$$v=\frac{p^m-1}{2}+r,\,\, 1\leq r\leq p^m-2.$$

\begin{lem}\label{lem10} Let $p\geq 5$ be an odd prime and $m$ be a positive integer. Suppose $u=\frac{p^m+1}{2}$ and $k<m$ is a nonnegative integer. Then $v\notin C_u$ and $|C_v|=m$ if

\noindent 1) $v=\frac{p^m-1}{2}+p^k+1$ and $\frac{m}{{\rm gcd}(k,m)}\equiv 0\,({\rm mod}\,4)$; or

\noindent 2) $v=\frac{p^m-1}{2}+2p^k$ and $m\equiv 0\,({\rm mod}\,4)$; or

\noindent 3) $v=\frac{p^m-1}{2}+\frac{p^m-3}{2}$ and $m\equiv 0\,({\rm mod}\,2)$; or

\noindent 4) $v=\frac{p^m-1}{2}-1$.
\end{lem}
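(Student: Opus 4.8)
The plan is to prove both assertions — that $v\notin C_u$ and that $|C_v|=m$ — by the same gcd-and-size counting arguments already used in Steps 1 and 2 of the proof of Theorem~\ref{thm2}, applied separately to each of the four forms of $v$. Throughout I will write $n=p^m-1$ and use repeatedly that $\gcd(p^k-1,n)=p^{\gcd(k,m)}-1$ and that $\gcd(p^k+1,n)$ is given by Lemma~\ref{gongyinzi}; since $p\ge 5$, a sum of at most four terms of the form $\pm p^j$ with $0\le j<m$ is strictly bounded in absolute value by $p^m-1$, which is the mechanism that forces $v\notin C_u$.

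For the claim $v\notin C_u$: if $v\equiv p^i u\,({\rm mod}\,n)$ for some $0\le i<m$, I multiply through by a suitable small integer to clear the $u=\frac{p^m+1}{2}$ and obtain a congruence $\equiv 0\,({\rm mod}\,n)$ whose left side, written with exponents reduced mod $m$, is a $\pm$-combination of few powers of $p$. In case~1), $v=\frac{p^m-1}{2}+p^k+1$ gives $2v=p^m-1+2p^k+2\equiv 2p^k+2+(p^m-1)\equiv\cdots$, so from $2v\equiv 2p^iu=p^i(p^m+1)\equiv 2p^i\,({\rm mod}\,n)$ one gets $2p^k+2\equiv 2p^i\,({\rm mod}\,n)$, i.e. $2p^k+2-2p^i\equiv 0$; since $|2p^k+2-2p^i|<p^m-1$ this forces $p^k+1=p^i$, impossible for $p\ge 5$. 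Cases 2), 3), 4) are handled identically: in 2) one reaches $2p^k\equiv p^i\,({\rm mod}\,n)$; in 3) one reaches $p^m-2\equiv p^i\,({\rm mod}\,n)$ hence $p^i\equiv -2\,({\rm mod}\,n)$, impossible; in 4), $2v=p^m-3\equiv -2\,({\rm mod}\,n)$, so $2p^i\equiv -2$, again impossible. The only subtlety is bookkeeping the reductions of exponents mod $m$, exactly as the $i_m$ notation in Step~1 of Theorem~\ref{thm2}.

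For the claim $|C_v|=m$: by Lemma~\ref{gong}, $l_v:=|C_v|$ divides $m$, and if $l_v<m$ then $v(p^{l_v}-1)\equiv 0\,({\rm mod}\,n)$, so $n=\gcd(v(p^{l_v}-1),n)\mid \gcd(v,n)\cdot(p^{\gcd(l_v,m)}-1)\le \gcd(v,n)\,(p^{m/q}-1)$ where $q$ is the smallest prime divisor of $m$. It therefore suffices to bound $\gcd(v,n)$ from above. In case~1) I multiply $v$ by $2(p^k-1)$: $2v(p^k-1)=(p^m-1)(p^k-1)+2(p^k+1)(p^k-1)\equiv 2(p^{2k}-1)\,({\rm mod}\,n)$, so $\gcd(v,n)$ divides something comparable to $\gcd(p^{2k}-1,n)=p^{\gcd(2k,m)}-1$; the hypothesis $\frac{m}{\gcd(k,m)}\equiv 0\,({\rm mod}\,4)$ is what keeps $\gcd(2k,m)$ a proper divisor of $m$ and simultaneously, via Lemma~\ref{gongyinzi}, keeps $\gcd(p^k+1,n)$ controlled, so the product $\gcd(v,n)(p^{m/q}-1)$ stays strictly below $n$, a contradiction. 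In cases 2) and 3) I similarly find $\gcd(v,n)\mid 2\gcd(p^k+1,n)$ or $\gcd(v,n)\mid 2\gcd(\frac{p^m-3}{2},n)$, using $m\equiv 0\,({\rm mod}\,4)$ resp. $m\equiv 0\,({\rm mod}\,2)$ to size these; in case 4), $2v=p^m-3$, so $\gcd(v,n)\mid \gcd(p^m-3,n)=\gcd(2,n)=2<p$ and then $|C_v|=m$ follows directly from Lemma~\ref{lem1}.

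The main obstacle I anticipate is not any single estimate but getting the divisibility hypotheses on $m$ to do exactly the right work in case~1): one must check both that $\gcd(2k,m)$ (or the relevant gcd after clearing the factor $p^k+1$) is a strict divisor of $m$ and that the $p^k+1$ "defect" contributed through Lemma~\ref{gongyinzi} does not conspire with $p^{m/q}-1$ to recover all of $n$; the condition $\frac{m}{\gcd(k,m)}\equiv 0\,({\rm mod}\,4)$ is tailored precisely so that $\frac{m}{\gcd(k,m)}$ is even (hence $\gcd(p^k+1,n)=p^{\gcd(k,m)}+1$) while leaving enough room. So the write-up will spend most of its length on case~1), with cases 2)–4) dispatched quickly by the same template, and case~4) essentially immediate from Lemma~\ref{lem1}.
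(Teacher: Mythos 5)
Your overall strategy (size arguments to rule out $v\equiv p^iu\pmod{p^m-1}$, and the divisibility $p^m-1\mid\gcd(v,p^m-1)\,(p^{\gcd(l_v,m)}-1)$ to force $l_v=m$) is the same as the paper's, and most of it goes through. The $v\notin C_u$ half is fine in all four cases: multiplying by $2$ reduces the congruence to a short combination of powers of $p$ whose absolute value is strictly between $0$ and $p^m-1$, apart from a small arithmetic slip in case 3) (you get $2p^i+2\equiv 0\pmod{p^m-1}$, not $p^i\equiv-2$, but the size contradiction is the same). Cases 2)--4) of the $|C_v|=m$ half are also fine and in fact simpler than you make them: $2v\equiv 4p^k$, $-2$, $-2\pmod{p^m-1}$ respectively, so $\gcd(v,p^m-1)\mid 4<p$ in each and Lemma \ref{lem1} applies immediately (your formula $\gcd(v,n)\mid 2\gcd(p^k+1,n)$ for case 2 is wrong, but harmlessly so).

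The genuine gap is in case 1), the only case needing real work. Your stated route bounds $\gcd(v,p^m-1)$ via $2v(p^k-1)\equiv 2(p^{2k}-1)$, giving $\gcd(v,p^m-1)\le 2(p^{\gcd(2k,m)}-1)=2(p^{2\gcd(k,m)}-1)$; the hypothesis only forces $\gcd(k,m)\le m/4$, so this bound can be as large as $2(p^{m/2}-1)$, and then $\gcd(v,p^m-1)(p^{l_v}-1)\le 2(p^{m/2}-1)^2\approx 2p^m$, which does \emph{not} contradict divisibility by $p^m-1$. The estimate you need is the one you only gesture at: from $2v-(p^m-1)=2(p^k+1)$ you get $\gcd(v,p^m-1)\mid 2(p^k+1)$, and since $\tfrac{m}{\gcd(k,m)}\equiv 0\pmod 4$ is even, Lemma \ref{gongyinzi} gives $\gcd(p^k+1,p^m-1)=p^{\gcd(k,m)}+1\le p^{m/4}+1$; then $2(p^{m/4}+1)(p^{m/2}-1)<p^m-1$ for $p\ge 5$ and $4\mid m$, which is the required contradiction. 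This is exactly the paper's computation (it evaluates $\gcd(v,p^m-1)=p^{\gcd(k,m)}+1$ directly and uses $(p^{m/4}+1)(p^{m/2}-1)<p^m-1$); the detour through $p^{2k}-1$ is both unnecessary and too lossy, so as written your case 1) does not close.
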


\begin{proof} We only give the proof of  1) since  2), 3) and 4) can be proved in the same manner. We first prove that $v\notin C_u$. If $v\in C_u$, then there exists  $0\leq i<m$ such that $v\equiv p^iu\,({\rm mod}\,p^m-1)$. It is reduced to $p^k+1\equiv p^i\,({\rm mod}\,\frac{p^m-1}{2})$. Thus $\frac{p^m-1}{2}\mid p^k-p^i+1$. This is impossible since $p^k-p^i+1\leq p^{m-1}<\frac{p^m-1}{2}$. Hence $v\not\in C_u$.

It is now time to prove that $\mid C_v\mid =m$. By Lemma \ref{gongyinzi}, ${\rm gcd}(v,p^m-1)={\rm gcd}(\frac{p^m-1}{2}+p^k+1,\frac{p^m-1}{2})={\rm gcd}(p^k+1,\frac{p^m-1}{2})=p^{{\rm gcd}(k,m)}+1$ since $\frac{m}{{\rm gcd}(k,m)}\equiv 0\,({\rm mod}\,4)$.
Let $|C_v|=l_v$. By the definition of cyclotomic coset,  $v(p^{l_v}-1)\equiv 0\,({\rm mod}\,p^m-1)$. Thus
$(p^{{\rm gcd}(k,m)}+1)(p^{l_v}-1)\equiv 0\,({\rm mod}\,p^m-1)$. If $l_v\neq m$, then $l_v\leq \frac{m}{2}$ due to Lemma \ref{gong}. This together with the fact that $\frac{m}{{\rm gcd}(k,m)}\equiv 0\,({\rm mod}\,4)$ gives that  $(p^{{\rm gcd}(k,m)}+1)(p^{l_v}-1)\leq (p^{\frac{m}{4}}+1)(p^{\frac{m}{2}}-1)<p^m-1$. This is contrary to that $(p^{{\rm gcd}(k,m)}+1)(p^{l_v}-1)\equiv 0\,({\rm mod}\,p^m-1)$. Therefore $l_v= m$. This completes the proof.
\end{proof}

\begin{theorem}\label{thm4} Let $p\geq 5$ be an odd prime, $m$ be a positive integer  and $u=\frac{p^m+1}{2}$. Suppose $k<m$.
Then $C_p(0,1,u^{-1}v)$ or $C_p(0,1,uv^{-1})$  is optimal and has parameters $[p^m-1,p^m-2m-2,4]$ if

\noindent 1) $v=\frac{p^m-1}{2}+p^k+1$ and $\frac{m}{{\rm gcd}(k,m)}\equiv 0\,({\rm mod}\,4)$; or

\noindent 2) $v=\frac{p^m-1}{2}+2p^k$ and $m\equiv 0\,({\rm mod}\,4)$; or

\noindent 3) $v=\frac{p^m-1}{2}+\frac{p^m-3}{2}$ and $m\equiv 0\,({\rm mod}\,2)$; or

\noindent 4) $v=\frac{p^m-1}{2}-1$, $p^m\equiv 1\,({\rm mod}\,4)$ or ${\rm gcd}(v,p^m-1)=1$.

\end{theorem}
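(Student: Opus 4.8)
The plan is to follow the three‑step pattern of the proof of Theorem~\ref{thm2}. By the theory of multipliers of cyclic codes, $C_p(0,1,u^{-1}v)$ (when $p^m\equiv1\pmod4$, so that $u^{-1}$ exists) and $C_p(0,1,uv^{-1})$ (when $\gcd(v,p^m-1)=1$, so that $v^{-1}$ exists) are each equivalent to $C_p(0,u,v)$, and under the hypotheses of each of the four cases at least one of these inverses exists; so it suffices to prove that $C_p(0,u,v)$ has parameters $[p^m-1,p^m-2m-2,4]$, optimality then being the sphere‑packing observation of Section~1. For the dimension, Lemma~\ref{lem10} gives $v\notin C_u$ and $|C_v|=m$ in all four cases; combined with $|C_u|=m$ (Lemma~\ref{lem2}) and the fact that $p^{m-1}<v<p^m-1$ forces $v\notin C_1=\{1,p,\dots,p^{m-1}\}$, the cyclotomic cosets $C_0,C_u,C_v$ are pairwise distinct of sizes $1,m,m$, so the generator polynomial $m_{\alpha^0}(x)m_{\alpha^u}(x)m_{\alpha^v}(x)$ has degree $2m+1$ and the dimension is $p^m-2m-2$. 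There is no codeword of weight $1$; and since $m$ is even in cases 1)--3) (indeed $4\mid m$ in 1) and 2)) we have $p^m\equiv1\pmod4$, while case 4) hypothesises this or $\gcd(v,p^m-1)=1$, so Lemma~\ref{weight2} excludes codewords of weight $2$. The whole matter thus reduces to showing that there is no codeword of weight $3$.

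By the discussion at the end of Section~2, $C_p(0,u,v)$ has no codeword of weight $3$ if and only if \eqref{3'} has no distinct solutions $x,y\in\mathbb F_{p^m}^*\setminus\{1\}$ with $b_1,b_2\in\mathbb F_p^*\setminus\{-1\}$. Writing $v=\frac{p^m-1}{2}+r$ with $r\equiv p^k+1$, $2p^k$, $\frac{p^m-3}{2}$, $-1$ modulo $p^m-1$ in cases 1)--4) respectively, and using $z^{(p^m-1)/2}=\eta(z)$ for $z\in\mathbb F_{p^m}^*$, we have $z^u=\eta(z)z$ and $z^v=\eta(z)z^r$ (so $z^v=z^{-1}$ in case 3) and $z^v=\eta(z)z^{-1}$ in case 4)). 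I would split \eqref{3'} according to $(\eta(x),\eta(y))\in\{(1,1),(1,-1),(-1,1),(-1,-1)\}$, treating only the first three since the third equation is symmetric and the first two are interchanged under $x\leftrightarrow y$ together with a rescaling of $(b_1,b_2)$. In each subcase the signs $\eta(x),\eta(y)$ become explicit, the first two equations of \eqref{3'} express $b_2$ (hence $b_1=-1-b_2$) as a rational function of $x,y$ --- typically $b_2=\pm\frac{x-1}{y\pm1}$ --- and substituting into the third equation reduces \eqref{3'} to a single polynomial equation $F(x,y)=0$ over $\mathbb F_{p^m}$, to be analysed subject to $x\neq y$, $x,y\neq1$, $b_1,b_2\neq-1$, and the prescribed characters.

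For $(\eta(x),\eta(y))=(1,1)$ the equation $F(x,y)=0$ collapses, forcing $y^{p^k}=1$ in cases 1)--2) (after using $1+b_1+b_2=0$ to reach $b_1b_2(y^{p^k}-1)^e=0$, $e\in\{1,2\}$), hence $y=1$ since $\gcd(p^k,p^m-1)=1$, or $(x-y)(x-1)(y-1)=0$ in cases 3)--4); all impossible. For case 4) the quickest finish is to observe that when $p^m\equiv1\pmod4$ one has $u^{-1}v\equiv p^m-2\pmod{p^m-1}$, so the code is the known optimal code $C_p(0,1,p^m-2)$ of \cite{XCX}; in the remaining possibility ($p^m\equiv3\pmod4$, $\gcd(v,p^m-1)=1$) each $F(x,y)=0$ factors as $(x-y)(x\mp1)(y\mp1)=0$ or $(x\pm y)(x\mp1)(y+1)=0$, and every surviving branch is killed by $x\neq y$, $x,y\neq1$, or by producing $b_2=-1$. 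The real obstacle is cases 1), 2) (and case 3) treated along the same lines): there $F(x,y)=0$ does have solutions over $\mathbb F_{p^m}$ --- one solves it for $x$, and then for $b_1,b_2$, as explicit rational functions of $y$ (for instance $x=\frac{y(y^{p^k}-1)}{y^{p^k}+1}$ in subcase $(1,-1)$ of case 1) --- so one must exploit that a genuine weight‑$3$ codeword needs $b_1,b_2\in\mathbb F_p$. Applying the Frobenius to the forced relation $b_i=b_i^{p^k}$ and using the explicit form of $b_i$ pins $y$, hence $x$, into the proper subfield $\mathbb F_{p^{\gcd(2k,m)}}=\mathbb F_{p^{2\gcd(k,m)}}$ (this is where $\frac{m}{\gcd(k,m)}\equiv0\pmod4$, resp.\ $4\mid m$ in case 2), is consumed), after which one computes $\eta(x),\eta(y)$ for such subfield elements, exploiting that $m$ even makes $\mathbb F_p^*$ and the relevant low‑order subgroups consist of squares in $\mathbb F_{p^m}$, and checks that the characters one is forced to have contradict the prescribed pair $(\eta(x),\eta(y))$ (together with $b_1,b_2\neq-1$). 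Showing this last incompatibility cleanly is the crux of the argument; once all subcases are closed, \eqref{3'} has no admissible solution, so $C_p(0,u,v)$ has minimum distance exactly $4$.
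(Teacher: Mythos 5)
Your overall architecture coincides with the paper's: reduce to $C_p(0,u,v)$ by multipliers, get the dimension from Lemmas \ref{lem10} and \ref{lem2}, exclude weight $2$ via Lemma \ref{weight2}, and split \eqref{3'} according to $(\eta(x),\eta(y))$, keeping only $(1,1)$ and $(1,-1)$ by symmetry. Your $(1,1)$ computations and the explicit formula $x=\frac{y(y^{p^k}-1)}{y^{p^k}+1}$ are correct, and your observation that in case 4) with $p^m\equiv1\pmod 4$ one has $u^{-1}v\equiv p^m-2\pmod{p^m-1}$, so the code is already the known optimal code $C_p(0,1,p^m-2)$ of \cite{XCX}, is a genuine shortcut that the paper does not use.

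The gap is exactly where you flag it: the $(\eta(x),\eta(y))=(1,-1)$ branch of cases 1) and 2) is described but not proved, and the justification you sketch for the ``last incompatibility'' --- that ``$m$ even makes the relevant low-order subgroups consist of squares'' --- is false as stated. For $\mathbb F_{p^{2\gcd(k,m)}}^*$ to consist of squares of $\mathbb F_{p^m}$ one needs $\frac{p^m-1}{p^{2\gcd(k,m)}-1}$ to be even, i.e.\ $\frac{m}{2\gcd(k,m)}$ even, which is precisely the hypothesis $\frac{m}{\gcd(k,m)}\equiv 0\pmod 4$; ``$m$ even'' alone fails (take $m=2$, $k=1$, where the ``subfield'' is all of $\mathbb F_{p^m}$). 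The paper closes this branch by deriving from \eqref{18} the identity $(1+\tfrac1y)^{p^k+1}=-\tfrac{2}{b_1}\in\mathbb F_p^*$, raising it to the $(p-1)$-st power, using Lemma \ref{gongyinzi} to evaluate ${\rm gcd}((p-1)(p^k+1),p^m-1)=(p-1)(p^{\gcd(k,m)}+1)$, concluding $1+\tfrac1y$ (hence $y$) lies in $\mathbb F_{p^{2\gcd(k,m)}}^*$, and then invoking $\frac{m}{\gcd(k,m)}\equiv 0\pmod 4$ a second time to force $\eta(y)=1$, contradicting $\eta(y)=-1$; your Frobenius-on-$b_i$ variant is morally the same but you must actually execute the character computation, since it is the whole content of parts 1) and 2). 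Also, case 3) is not ``along the same lines'' as 1)--2): there $z^v=z^{-1}$ for all $z$, the $(1,-1)$ branch reduces to the quadratic $y^2-\frac{2}{b_1}y-1=0$ over $\mathbb F_p$, and one must separately argue that its roots are squares in $\mathbb F_{p^m}$; this needs its own short verification rather than the subfield machinery you propose.
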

\begin{proof}By the theory of multiplies of cyclic codes, $C_p(0,1,u^{-1}v)$ or $C_p(0,1,uv^{-1})$ is equivalent to $C_p(0,u,v)$. Hence we only need to prove that $C_p(0,u,v)$ is optimal and has parameters $[p^m-1,p^m-2m-2,4]$.

According to Lemma \ref{lem10}, $v\notin C_u$ and $|C_v|=m$. By Lemma \ref{lem2}, the dimension of
$C_p(0,u,v)$ is equal to $p^m-2m-2$. Since $m$ is even in  1), 2) and 3),
$p^m\equiv 1\,({\rm mod}\,4)$. By Lemma \ref{weight2}, $C_p(0,u,v)$ has no codeword of Hamming weight 2. Suppose that
$C_p(0,u,v)$ has a codeword of Hamming weight 3. Then there exist two elements $b_1,b_2\in \mathbb F_p^*\backslash \{-1\}$ and two distinct elements $x,y \in \mathbb F_{p^m}^*\backslash \{1\}$ such that \eqref{3'} holds.

We first consider  1).
When $(\eta(x),\eta(y))=(1,1)$,
\eqref{3'} becomes
\begin{equation}\label{17}
\begin{array}{ll}\left\{\begin{array}{lcr}
1+b_1+b_2&=&0\\
x+b_1+b_2y&=&0\\
x^{p^k+1}+b_1+b_2y^{p^k+1}&=&0.
\end{array}\right.
\end{array}
\end{equation}From the second equation of \eqref{17}, we get that $x=-(b_1+b_2y)$. Replacing $x$ by $-(b_1+b_2y)$ in the third equation of \eqref{17} gives that
$$\begin{array}{rcl}&&x^{p^k+1}+b_1+b_2y^{p^k+1}\\
&&=(b_1+b_2y)^{p^k+1}+b_1+b_2y^{p^k+1}\\
&&=(b_1+b_2y)(b_1+b_2y^{p^k})+b_1+b_2y^{p^k+1}\\
&&= (b_2+b_2^2)y^{p^k+1}+b_1b_2(y^{p^k}+y)+(b_1+b_1^2)\\
&&=-b_1b_2(y^{p^k+1}-y^{p^k}-y+1)\\
&&=-b_1b_2(y-1)^{p^k+1}\\
&&=0.
\end{array}$$ Hence
$(y-1)^{p^k+1}=0$ due to $b_1b_2\neq 0$. Thus $y=1$. This is contrary to that $y\neq 1$.
When $(\eta(x),\eta(y))=(1,-1)$, \eqref{3'} can be written as
\begin{equation}\label{18}
\begin{array}{ll}\left\{\begin{array}{lcr}
1+b_1+b_2&=&0\\
x+b_1-b_2y&=&0\\
x^{p^k+1}+b_1-b_2y^{p^k+1}&=&0.
\end{array}\right.
\end{array}
\end{equation}
Substituting the first and second equations into the third equation of \eqref{18} yields that
$$\begin{array}{rcl}&& x^{p^k+1}+b_1-b_2y^{p^k+1}\\
&&=(b_2y-b_1)^{p^k+1}+b_1-b_2y^{p^k+1}\\
&&= (b_2y-b_1)(b_2y^{p^k}-b_1)+b_1-b_2y^{p^k+1}\\
&&=(b_2^2-b_2)y^{p^k+1}-b_1b_2(y^{p^k}+y)+(b_1^2+b_1)\\
&&=-b_1b_2(y+1)^{p^k+1}-2b_2y^{p^k+1}\\
&&=0.
\end{array}$$ This is reduced to
\begin{equation}\label{19}
(1+\frac{1}{y})^{p^k+1}=-\frac{2}{b_1}.
\end{equation}Taking the $(p-1)$-th power on both sides of \eqref{19} yields that
\begin{equation}\label{20}
(1+\frac{1}{y})^{(p-1)(p^k+1)}=1.
\end{equation}Since $\frac{m}{{\rm gcd}(k,m)}\equiv 0\,({\rm mod}\,4)$, ${\rm gcd}((p-1)(p^k+1),p^m-1)=(p-1){\rm gcd}(p^k+1,\frac{p^m-1}{p-1})=(p-1)(p^{{\rm gcd}(k,m)}+1)$ due to Lemma \ref{gongyinzi}. Thus \eqref{20} becomes $(1+\frac{1}{y})^{(p-1)(p^{{\rm gcd}(k,m)}+1)}=1$. And thus
\begin{equation}\label{21}
(1+\frac{1}{y})^{p^{2{\rm gcd}(k,m)}-1}=1.
\end{equation}Hence $y\in \mathbb F_{p^{2{\rm gcd}(k,m)}}^*$.
Since $\mathbb F_{p^{2{\rm gcd}(k,m)}}^*=\{\alpha^{\frac{p^m-1}{p^{2{\rm gcd}(k,m)}-1}i},\, 0\leq i<p^{2{\rm gcd}(k,m)}-1\}$ and $\frac{m}{{\rm gcd}(k,m)}\equiv 0\,({\rm mod}\,4)$, all elements of $\mathbb F_{p^{2{\rm gcd}(k,m)}}^*$ are squares. Hence \eqref{21} has no solution $y\in\mathbb F_{p^{2{\rm gcd}(k,m)}}^*$ such that $\eta (y)=-1$. Therefore  1) has no codeword of Hamming weight 3.  2)  can be proved by the same approach as the proof of  1).

We now prove  3). If $(\eta(x),\eta(y))=(1,1)$, then $x=-(b_1+b_2y)$ from the second equation of \eqref{3'}. Plugging $x=-(b_1+b_2y)$ into the third equation of  \eqref{3'} gives that $b_1+b_2y^{-1}-(b_1+b_2y)^{-1}=0$. Multiplying both sides of it by $y(b_1+b_2y)$ yields that $(y-1)^2=0$. Thus $y=1$, which is contrary to that $y\neq 1$. If $(\eta(x),\eta(y))=(1,-1)$, one has similarly that
\begin{equation}\label{1111}
y^2-\frac{2}{b_1}y-1=0.
\end{equation}Since $m$ is even, all elements of $\mathbb{F}_p^*$ are squares. This implies that $1+b_1^2$ is a square in $\mathbb{F}_p^*$. Hence \eqref{1111} has two solutions $y=\frac{1\pm \sqrt{1+b_1^2}}{b_1}$ in $ \mathbb{F}_p^*$, which indicates $\eta(y)=1$. This is contrary to that $\eta(y)=-1$. Hence $C_p(0,u,v)$ has no codeword of Hamming weight 3.

It is  now time to prove that  4). If $(\eta(x),\eta(y))=(1,1)$, then \eqref{3'} is reduced to $(y-1)^2=0.$ Thus $y=1$. This is contrary to that $y\neq 1$. If $(\eta(x),\eta(y))=(1,-1)$, then \eqref{3'} is simplified to $(y+1)^2=0.$ Thus $y=-1$. This together with the first equation of \eqref{3'} yields that $x=b_2y-b_1=-b_2-b_1=1$. This is contrary to that $x\neq 1$. Hence $C_p(0,u,v)$ has no codeword of Hamming weight 3. This completes the proof.
\end{proof}

\begin{remark}\label{remark4}
In 2016, Xu and Cao \cite{XCX} proved that $C_5(0,1,e)$ has parameters $[5^m-1,5^m-2m-2,4]$, where $m$ is even and $e=\frac{5^m-1}{2}-1$. Let $v$ be given by 3) of Theorem \ref{thm4}. Then  $u^{-1}v\equiv e\,({\rm mod}\,5^m-1)$. Hence  $C_5(0,1,e)$ is the special case $p=5$ of Theorem \ref{thm4}.
\end{remark}

\begin{example} Let $m=4$ and $u=\frac{5^m+1}{2}=313$. Let $\alpha$ be the generator of $\mathbb F_{5^m}^*$ with $\alpha^4 + 4\alpha^2 + 4\alpha + 2=0$.
Four examples of the codes of Theorem \ref{thm4} are the following:
\begin{description}
  \item[1)] Let $k=1$ and $v= \frac{5^m-1}{2}+5^k+1=318$. Then $u^{-1}v=318$. The code $C_5(0,1,u^{-1}v)$ has parameters $[624,615,4]$ and generator polynomial $x^9 + 2x^8 + 4x^7 + 3x^5 + 2x^4 + 4x^3 + 2x^2 + 2$.
  \item[2)] Let $k=1$ and $v= \frac{5^m-1}{2}+2\cdot5^k+1=322$. Then $u^{-1}v=322$. The code $C_5(0,1,u^{-1}v)$ has parameters $[624,615,4]$ and generator polynomial $x^9 + x^8 + 2x^7 + 3x^6 + 2x^5 + 2x^3 + 2x + 2$.
  \item[3)] Let $v= \frac{5^m-1}{2}+\frac{5^m-3}{2}=623$. Then $v^{-1}=623$ and $u^{-1}v=uv^{-1}=uv=311$. The code $C_5(0,1,uv)$ has parameters $[624,615,4]$ and generator polynomial $x^9 + 2x^8 + 3x^7 + 4x^5 + 4x^4 + 2x^3 + x^2 + 4x + 4$.
\item[4)] Let $v= \frac{5^m-1}{2}-1=311$. Then $v^{-1}=311$ and $u^{-1}v=uv^{-1}=uv=623$. The code $C_5(0,1,uv)$ has  parameters $[624,615,4]$ and generator polynomial $x^9 + x^8 + 4x^7 + x^6 + 4x^5 + x^4 + 4x^3 + x^2 + 4x + 4$.
\end{description}
\end{example}
\subsection{ The fourth  class of optimal $p$-ary cyclic codes  $C_p(0,1,u^{-1}v)$ or $C_p(0,1,uv^{-1})$ with parameters $[p^m-1,p^m-2m-2,4]$}

In this subsection, we consider the exponent $v$ of the form
\begin{equation}\label{v=-2} v\equiv -2\,({\rm mod}\,p^m-1).
\end{equation}
The following lemma is useful in the sequel.

\begin{lem}\label{11}$($\cite[Theorems 1 and 2 on page 53]{Ireland}$)$ Let $p$ be an odd prime as well as $q$.

\noindent 1) If $q\equiv 1\,({\rm mod}\,4)$, then $\eta(q)=1$ in $\mathbb F_p$  if and only if $p\equiv r \,({\rm mod}\,q)$, where $\eta(r)=1$ in $\mathbb F_q$;

\noindent 2) If $q\equiv 3\,({\rm mod}\,4)$, then $\eta(q)=1$ in $\mathbb F_p$ if and only if $p\equiv \pm r^2 \,({\rm mod}\,4q)$, where $r$ is an odd integer prime to $q$; and

\noindent 3) $\eta(2)=(-1)^{\frac{p^2-1}{8}}.$
\end{lem}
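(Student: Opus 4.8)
The plan is to recognize Lemma~\ref{11} as a repackaging of the law of quadratic reciprocity together with its two supplementary laws, and to deduce each of the three parts from those classical facts. Throughout I would identify $\eta(q)$ in $\mathbb{F}_p$ with the Legendre symbol $\left(\frac{q}{p}\right)$ via Euler's criterion $\eta(q)\equiv q^{(p-1)/2}\pmod{p}$, and likewise identify $\eta(r)$ in $\mathbb{F}_q$ with $\left(\frac{r}{q}\right)$; then each assertion becomes a statement purely about Legendre symbols and residue classes.

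For part~1), since $q\equiv 1\pmod{4}$ the integer $\frac{q-1}{2}$ is even, so the reciprocity law $\left(\frac{q}{p}\right)\left(\frac{p}{q}\right)=(-1)^{\frac{p-1}{2}\cdot\frac{q-1}{2}}$ collapses to $\left(\frac{q}{p}\right)=\left(\frac{p}{q}\right)$. The right-hand side depends only on $p$ modulo $q$ and equals $1$ exactly when $p\bmod q$ is a nonzero square in $\mathbb{F}_q$, i.e.\ when $p\equiv r\pmod{q}$ for some $r$ with $\eta(r)=1$ in $\mathbb{F}_q$. That is the claimed equivalence.

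For part~2), since $q\equiv 3\pmod{4}$ the exponent $\frac{p-1}{2}\cdot\frac{q-1}{2}$ has the same parity as $\frac{p-1}{2}$, so reciprocity gives $\left(\frac{q}{p}\right)=(-1)^{(p-1)/2}\left(\frac{p}{q}\right)=\left(\frac{-1}{p}\right)\left(\frac{p}{q}\right)$. Hence $\eta(q)=1$ in $\mathbb{F}_p$ if and only if either $p\equiv 1\pmod{4}$ and $p$ is a square mod $q$, or $p\equiv 3\pmod{4}$ and $p$ is a nonsquare mod $q$. It then remains to check, via the Chinese Remainder Theorem modulo $4q$, that this union of residue classes is precisely $\{\pm r^{2}\bmod 4q:\ r\ \text{odd},\ \gcd(r,q)=1\}$: an odd $r$ satisfies $r^{2}\equiv 1\pmod{8}$ (so in particular $r^{2}\equiv 1\pmod 4$) while $r^{2}\bmod q$ ranges over all nonzero squares, and since $q\equiv 3\pmod 4$ the class $-1$ is a nonsquare mod $q$, so $-r^{2}\equiv 3\pmod 4$ and $-r^{2}$ is a nonsquare mod $q$. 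Comparing, the two sets coincide, which yields the stated criterion.

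For part~3) I would simply invoke the second supplementary law $\left(\frac{2}{p}\right)=(-1)^{(p^{2}-1)/8}$, which is proved by Gauss's lemma: one counts the integers in $\{2,4,\dots,p-1\}$ whose least positive residue mod $p$ lies in $(p/2,p)$, observes that this count has the same parity as $\frac{p^{2}-1}{8}$, and lets Gauss's lemma translate the parity into the sign of the symbol. The main obstacle, if one insists on a fully self-contained proof rather than citing \cite{Ireland}, is establishing quadratic reciprocity and Gauss's lemma themselves (e.g.\ via Gauss sums or Eisenstein's lattice-point argument); once those are granted, the only remaining work is the elementary CRT bookkeeping in part~2), which is routine.
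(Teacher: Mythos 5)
The paper offers no proof of this lemma at all: it is quoted verbatim (as a citation of Theorems 1 and 2 on page 53 of Ireland and Rosen) and used as a black box, so there is no in-paper argument to compare yours against. Your derivation is correct and is essentially the standard one that the cited source itself gives: part 1) is the clean case of quadratic reciprocity where $\frac{q-1}{2}$ is even; part 3) is the second supplementary law; and part 2) is reciprocity combined with $\left(\frac{-1}{p}\right)=(-1)^{(p-1)/2}$ plus the Chinese Remainder Theorem bookkeeping modulo $4q$. Your handling of that bookkeeping is sound — in particular you correctly use that every reduced residue class mod $q$ has an odd representative, that $r^2\equiv 1\pmod 4$ for odd $r$, and that $-1$ is a nonresidue mod $q$ when $q\equiv 3\pmod 4$, which is exactly what makes the two residue-class descriptions coincide. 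The only caveat is the one you already flag: a fully self-contained treatment would require proving reciprocity and Gauss's lemma, which is more than the paper (or you) needs; citing them, as the paper does, is entirely appropriate here.
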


The following lemma then follows from Lemma \ref{11} and the content on the page 55 of \cite{Ireland}.

\begin{lem}\label{12} Let $p$ be an odd prime. Then

\noindent 1) $\eta(2)=1$ in $\mathbb F_p$  if and only if $p\equiv\pm 1\,({\rm mod}\,8)$;

\noindent 2) $\eta(3)=1$ in $\mathbb F_p$ if and only if $p\equiv\pm 1\,({\rm mod}\,12)$;

\noindent 3) $\eta(5)=1$ in $\mathbb F_p$ if and only if $p\equiv \pm1\,({\rm mod}\,5)$;

\noindent 4) $\eta(6)=1$ in $\mathbb F_p$ if and only if $p\equiv \pm1,\pm5\,({\rm mod}\,24)$; and

\noindent 5) $\eta(7)=1$ in $\mathbb F_p$ if and only if $p\equiv \pm1,\pm9,\pm25\,({\rm mod}\,28)$.
\end{lem}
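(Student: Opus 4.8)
The plan is to derive all five equivalences directly from Lemma~\ref{11}, treating $q\in\{2,3,5,7\}$ one at a time and handling $\eta(6)$ through the multiplicativity $\eta(6)=\eta(2)\eta(3)$ of the quadratic character on $\mathbb F_p$. For 1), I would unwind part 3) of Lemma~\ref{11}: $\eta(2)=(-1)^{(p^2-1)/8}$ equals $1$ exactly when $(p^2-1)/8$ is even, i.e.\ when $16\mid p^2-1$; checking the four odd residues modulo $8$ shows that $p^2\equiv 1\pmod{16}$ holds precisely for $p\equiv\pm1\pmod 8$, which is 1).

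For the two primes $q\equiv 3\pmod 4$, namely $q=3$ and $q=7$, I would apply part 2) of Lemma~\ref{11}. When $q=3$: $\eta(3)=1$ iff $p\equiv\pm r^2\pmod{12}$ for an odd integer $r$ coprime to $3$, and every such $r$ satisfies $r^2\equiv1\pmod{12}$, so the condition collapses to $p\equiv\pm1\pmod{12}$, which is 2). When $q=7$: $\eta(7)=1$ iff $p\equiv\pm r^2\pmod{28}$ for an odd integer $r$ coprime to $7$, and enumerating the squares of the odd units modulo $28$ gives $r^2\in\{1,9,25\}$, whence $p\equiv\pm1,\pm9,\pm25\pmod{28}$, which is 5).

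It remains to treat $q=5$ and $q=6$. Since $5\equiv1\pmod4$, part 1) of Lemma~\ref{11} yields $\eta(5)=1$ iff $p\equiv r\pmod5$ for some nonzero square $r$ modulo $5$; as those squares are $1$ and $4\equiv-1$, this is $p\equiv\pm1\pmod5$, proving 3). For 4), $\eta(6)=\eta(2)\eta(3)$, so $\eta(6)=1$ iff $\eta(2)=\eta(3)$; combining 1) and 2) modulo $24$, the case $\eta(2)=\eta(3)=1$ forces $p\equiv 1,23\pmod{24}$ and the case $\eta(2)=\eta(3)=-1$ forces $p\equiv 5,19\pmod{24}$, and together these give $p\equiv\pm1,\pm5\pmod{24}$. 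No genuine difficulty arises: once Lemma~\ref{11} is granted, each step is a short finite residue computation, the only bookkeeping being the list of squares of odd units modulo $28$ for the $\eta(7)$ case and the intersection of the two residue conditions modulo $24$ for the $\eta(6)$ case.
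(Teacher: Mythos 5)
Your proof is correct, and for the only case the paper actually argues---part 5)---your route is identical: apply part 2) of Lemma~\ref{11} with $q=7$ and enumerate the squares of the odd units modulo $28$ to get $r^2\equiv 1,9,25\pmod{28}$. The difference is in parts 1)--4): the paper simply cites page 55 of the Ireland--Rosen reference for these, whereas you derive them from Lemma~\ref{11} directly (part 3) of that lemma for $\eta(2)$, part 2) for $\eta(3)$, part 1) for $\eta(5)$, and multiplicativity $\eta(6)=\eta(2)\eta(3)$ combined with a residue computation modulo $24$ for $\eta(6)$). Your version is therefore more self-contained; the paper's is shorter but leans on the textbook for four of the five statements. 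All of your finite residue checks are accurate (in particular, $p^2\equiv 1\pmod{16}$ exactly for $p\equiv\pm1\pmod 8$, every odd $r$ coprime to $3$ has $r^2\equiv 1\pmod{12}$, and the intersection of the mod-$8$ and mod-$12$ conditions modulo $24$ gives $\{1,5,19,23\}=\{\pm1,\pm5\}$), so there is no gap.
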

\begin{proof} Cases 1), 2), 3) and 4) can be found on the page 55 of \cite{Ireland}. We now prove case 5). By Lemma \ref{11}, $\eta(7)=1$ in $\mathbb F_p$ if and only if $p\equiv \pm r^2 \,({\rm mod}\,28)$, where $r$ is an odd integer prime to 7. Thus $r=1,3,5,9,11,13,15,17,19,23,25,27$. And thus $r^2\equiv 1,9,25\,({\rm mod}\,28)$.  This completes the proof.
\end{proof}

\begin{theorem}\label{thm5} Let $p\equiv 1\,({\rm mod}\,4) $ be an odd prime and $m$ be an odd  positive integer. Let $u=\frac{p^m+1}{2}$ and $v$ be defined by \eqref{v=-2}. Then   $C_p(0,1,u^{-1}v)$ has parameters $[p^m-1,p^m-2m-2,4]$ if and only if $x^2+2(b+1)x+b=0$ has no solution $x\in\mathbb F_{p}^*\backslash \{\pm1\}$ such that $\eta(x)=1$ for any $b\in \mathbb F_{p}^*\backslash\{-1\}$.
\end{theorem}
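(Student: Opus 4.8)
The plan is to follow the three-step scheme of the proof of Theorem~\ref{thm2}. Since $p\equiv 1\,({\rm mod}\,4)$ forces $p^m\equiv 1\,({\rm mod}\,4)$, the integer $u=\frac{p^m+1}{2}$ is odd; as ${\rm gcd}(u,p^m-1)$ divides $2$, it equals $1$, so by the theory of multipliers of cyclic codes $C_p(0,1,u^{-1}v)$ is equivalent to $C_p(0,u,v)$ and it suffices to study the latter. (This also explains why only the first form occurs in the statement, since ${\rm gcd}(v,p^m-1)={\rm gcd}(2,p^m-1)=2\neq 1$.) First I would record that $v\equiv -2\notin C_u$ and $|C_v|=m$: for $v\notin C_u$ one argues by a size estimate as in Theorem~\ref{thm2} (if $-2\equiv p^i u$, then after doubling $p^m-1$ would divide $2(p^i+2)$, which is too small), while $|C_v|=m$ follows from ${\rm gcd}(p^m-3,p^m-1)=2<p$ and Lemma~\ref{lem1}. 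Together with $|C_u|=m$ from Lemma~\ref{lem2}, the generator polynomial of $C_p(0,u,v)$ has degree $2m+1$, so the dimension is $p^m-2m-2$; and since $p^m\equiv 1\,({\rm mod}\,4)$, Lemma~\ref{weight2} rules out codewords of weight $2$.

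The heart of the proof is the weight-$3$ analysis through \eqref{3'}, now with $v\equiv -2$, so that $x^u=x^{(p^m-1)/2}x=\eta(x)x$ and $x^v=x^{-2}$. By the symmetry of a weight-$3$ codeword under permuting its three nonzero coordinates, it suffices to treat the cases $(\eta(x),\eta(y))=(1,1)$ and $(1,-1)$. In the case $(1,1)$, subtracting the first equation of \eqref{3'} from the second and third and eliminating $b_2$ leads to $(y+1)x^2=(x+1)y^2$, i.e. $(x-y)(xy+x+y)=0$; as $x\neq y$ this gives $y=\frac{-x}{x+1}$, whence $b_2=-\frac{x-1}{y-1}=\frac{x^2-1}{2x+1}$, and putting $b=b_1=-1-b_2$ turns $x^2-1=b_2(2x+1)$ into $x^2+2(b+1)x+b=0$. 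The case $(1,-1)$ is parallel: one gets $(x+y)(x-y-xy)=0$, the branch $y=-x$ is impossible because $\eta(-1)=1$ would force $\eta(y)=\eta(x)=1$, and the branch $y=\frac{x}{x+1}$ yields the same $b_2=\frac{x^2-1}{2x+1}$ and the same quadratic. A short check shows that for $b\in\mathbb F_p^*\setminus\{-1\}$ none of the forbidden values $x\in\{0,1,-1,-\tfrac12,-2\}$ satisfies $x^2+2(b+1)x+b=0$, so automatically $x\neq\pm1$, $2x+1\neq0$, $y\notin\{0,1,x\}$, and $b_1,b_2\in\mathbb F_p^*\setminus\{-1\}$; moreover the remaining constraint $\eta(y)=1$ (case $(1,1)$), resp. $\eta(y)=-1$ (case $(1,-1)$), reads $\eta(x+1)=1$, resp. $\eta(x+1)=-1$, and the disjunction of these is just $\eta(x+1)\neq 0$. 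Hence $C_p(0,u,v)$ has a codeword of weight $3$ if and only if there are $b\in\mathbb F_p^*\setminus\{-1\}$ and $x\in\mathbb F_{p^m}^*\setminus\{\pm1\}$ with $\eta(x)=1$ and $x^2+2(b+1)x+b=0$.

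The last step is the descent from $\mathbb F_{p^m}$ to $\mathbb F_p$. Since $m$ is odd, ${\rm gcd}(2,m)=1$, so by Lemma~\ref{irreducible} an irreducible quadratic over $\mathbb F_p$ remains irreducible over $\mathbb F_{p^m}$; therefore $x^2+2(b+1)x+b$ has a root in $\mathbb F_{p^m}$ if and only if it has one in $\mathbb F_p$. Moreover, for $a\in\mathbb F_p^*$ we have $\eta(a)=a^{(p^m-1)/2}=\bigl(a^{(p-1)/2}\bigr)^{1+p+\cdots+p^{m-1}}=a^{(p-1)/2}$ because $1+p+\cdots+p^{m-1}$ is odd, so on $\mathbb F_p^*$ the character $\eta$ coincides with the quadratic character of $\mathbb F_p$. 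Consequently $C_p(0,u,v)$ has no codeword of weight $3$ exactly when $x^2+2(b+1)x+b=0$ has no solution $x\in\mathbb F_p^*\setminus\{\pm1\}$ with $\eta(x)=1$ for any $b\in\mathbb F_p^*\setminus\{-1\}$. Combining this with the dimension count, the absence of weight-$2$ words, and the fact that a $[p^m-1,p^m-2m-2]$ code over $\mathbb F_p$ with $p\geq 5$ has minimum distance at most $4$ by the Sphere Packing bound, we obtain the stated equivalence.

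The main obstacle is the weight-$3$ step: carrying out the elimination in \eqref{3'} cleanly, discarding the spurious roots $y=x$ and $y=-x$ for the correct reasons, and observing that the two cases $(1,1)$ and $(1,-1)$ lead to the same quadratic $x^2+2(b+1)x+b=0$ while contributing the two complementary values of $\eta(x+1)$, so that the final $\mathbb F_p$-condition depends only on $\eta(x)$. The bookkeeping of the degenerate exclusions and the $\mathbb F_{p^m}\to\mathbb F_p$ descent are then routine.
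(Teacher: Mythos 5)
Your proposal is correct and follows essentially the same route as the paper's proof: reduce to $C_p(0,u,v)$, verify $v\notin C_u$, $|C_v|=m$ and the absence of weight-$2$ words, then split the weight-$3$ analysis of \eqref{3'} into the cases $(\eta(x),\eta(y))=(1,1)$ and $(1,-1)$, derive $y=\mp\frac{x}{x+1}$ and the quadratic $x^2+2(b+1)x+b=0$ in each, and descend to $\mathbb F_p$ using that $m$ is odd. Your explicit check of the forbidden values $x\in\{0,1,-1,-\tfrac12,-2\}$ and the verification that $\eta$ restricts to the quadratic character of $\mathbb F_p$ make the converse direction and the descent slightly more airtight than the paper's write-up, but the argument is the same.
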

\begin{proof} By the theory of multiplies of cyclic codes, $C_p(0,1,u^{-1}v)$ is equivalent to $C_p(0,u,v)$. Hence we only need to prove that $C_p(0,u,v)$ has parameters $[p^m-1,p^m-2m-2,4]$ if and only if $x^2+2(b+1)x+b=0$ has no solution $x\in\mathbb F_{p}^*\backslash \{\pm1\}$ such that $\eta(x)=1$ for any $b\in \mathbb F_{p}^*\backslash\{-1\}$.

Since $p\equiv1\,({\rm mod}\,4)$ and $m$ is odd, $u$ is odd. This together with even $v$ leads to $v\notin C_u$. Since  ${\rm gcd}(v,p^m-1)={\rm gcd}(-2,p^m-1)=2$,  $\mid C_v\mid=m$ due to Lemma \ref{1}. By Lemma \ref{lem2}, the dimension of $C_p(0,u,v)$ is equal to $p^m-2m-2$.  Note that $p\equiv1\,({\rm mod}\,4)$. According to Lemma \ref{weight2},  $C_p(0,u,v)$ has no codeword of Hamming weight 2.  $C_p(0,u,v)$ has no codeword of Hamming weight 3 if and only if  \eqref{3'} has no distinct solutions $x,y \in \mathbb F_{p^m}^*\backslash \{1\}$ for any $b_1,b_2\in \mathbb F_p^*\backslash \{-1\}$.
The proof is divided into the following two cases due to the symmetry.

\textit{Case 1.}  $(\eta(x),\eta(y))=(1,1)$: In this case, the second equation  subtracting the first equation  of  \eqref{3'} gives that $b_2=\frac{1-x}{y-1}$. The third equation subtracting the first equation  of  \eqref{3'} gives that $x^{-2}-1+b_2(y^{-2}-1)=0$.
Replacing $b_2$ by $\frac{1-x}{y-1}$ in $x^{-2}-1+b_2(y^{-2}-1)=0$ yields that
\begin{equation}\label{22}
(x+1)y^2-x^2(y+1)=0.
\end{equation}
If $x+1=0$, then $y+1=0$. And then $x=y$. This is contrary to that $x\neq y$. It then follows from \eqref{22} that
$(y-x)(y+\frac{x}{x+1})=0$. Hence $y=-\frac{x}{x+1}$. Replacing $y$ by $-\frac{x}{x+1}$ in $b_2=\frac{1-x}{y-1}$ gives that $x^2-2b_2x-b_2-1=0$. It can be reduced to $x^2+2(b_1+1)x+b_1=0$ thanks to $b_2=-b_1-1$. Since $p^m\equiv 1\,({\rm mod}\,4)$, $\eta(-1)=\eta(\alpha^{\frac{p^m-1}{2}})=1$. Since $y=-\frac{x}{x+1}$ and $\eta(-1)=\eta(x)=\eta(y)=1$,  $\eta(x+1)=1$. Therefore we have that $x^2+2(b_1+1)x+b_1=0$, where $x\neq \pm 1$, $\eta(x)=\eta(x+1)=1$ and $b_1\in \mathbb F_{p}^*\backslash\{-1\}$.

\textit{Case 2.}  $(\eta(x),\eta(y))=(1,-1)$: Similar as in case 1, in this case \eqref{3'} is reduced to  $(y+x)(y-\frac{x}{x+1})=0$. If $y=-x$, then $\eta(y)=\eta(-x)=\eta(x)=1$. This is contrary to that $\eta(y)=-1$. Hence $y=\frac{x}{x+1}$. Putting $y=\frac{x}{x+1}$ and the first equation  into the second equation of \eqref{3'} gives that $x^2+2(b_1+1)x+b_1=0,$ where $x\neq \pm 1$, $\eta(x)=1$ and $\eta(x+1)=\eta(\frac{x}{x+1})=\eta(y)=-1$.

It then follows from case 1 and case 2 that $C_p(0,u,v)$ has no codeword of Hamming weight 3 if and only if
\begin{equation}\label{23}
x^2+2(b_1+1)x+b_1=0,
\end{equation}
 has no solution $x\in\mathbb F_{p^m}^*\backslash \{\pm1\}$ such that $\eta(x)=1$ for any $b\in \mathbb F_{p}^*\backslash\{-1\}$.
Since $m$ is odd, \eqref{23} has no solution $x\in\mathbb F_{p^m}^*\backslash \{\pm1\}$ if and only if \eqref{23} has no solution $x\in\mathbb F_{p}^*\backslash \{\pm1\}$.
Then the desired result  follows immediately. This completes the proof.
\end{proof}

\begin{cor}\label{cor7} Let $p=5$ and $m$ be odd. Suppose $u=\frac{p^m+1}{2}$ and $v$ be defined by \eqref{v=-2}. Then  $C_p(0,1,u^{-1}v)$  is optimal and has parameters $[p^m-1,p^m-2m-2,4]$.
\end{cor}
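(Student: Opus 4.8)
The plan is to apply Theorem \ref{thm5} directly, since $p=5\equiv 1\,({\rm mod}\,4)$ and $m$ is odd, so all its hypotheses are met for $v$ as in \eqref{v=-2}. By that theorem, $C_5(0,1,u^{-1}v)$ has parameters $[5^m-1,5^m-2m-2,4]$ if and only if the quadratic equation $x^2+2(b+1)x+b=0$ has no solution $x\in\mathbb F_5^*\setminus\{\pm1\}$ with $\eta(x)=1$ for any $b\in\mathbb F_5^*\setminus\{-1\}$. Thus the entire task collapses to checking this single condition over the tiny field $\mathbb F_5$.

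The key observation is that the condition holds \emph{vacuously}: there is no $x\in\mathbb F_5^*\setminus\{\pm1\}$ with $\eta(x)=1$ at all. Indeed $\mathbb F_5^*\setminus\{\pm1\}=\{\pm2\}$, and for odd $m$ one has $\eta(\pm2)=-1$ — this is precisely the fact already used in the proofs of Corollaries \ref{cor2} and \ref{cor4}. Concretely, writing $\mathbb F_5^*=\{\alpha^{\frac{5^m-1}{4}i}:0\le i\le3\}$, an element of $\mathbb F_5^*$ is a square in $\mathbb F_{5^m}$ iff the exponent $\frac{5^m-1}{4}i$ is even; since $5^m\equiv 5\,({\rm mod}\,8)$ for odd $m$, the integer $\frac{5^m-1}{4}$ is odd, and the generator $2$ of $\mathbb F_5^*$ occupies an odd power of $\alpha$, so $\eta(2)=-1$ and hence $\eta(3)=\eta(2^3)=-1$. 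Consequently the premise ``$x\in\mathbb F_5^*\setminus\{\pm1\}$ and $\eta(x)=1$'' is never satisfied, so $x^2+2(b+1)x+b=0$ has no such solution regardless of $b$.

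Therefore Theorem \ref{thm5} yields that $C_5(0,1,u^{-1}v)$ has parameters $[5^m-1,5^m-2m-2,4]$, and optimality follows from the Sphere Packing bound for $p$-ary cyclic codes with these parameters, as recalled in the introduction. There is essentially no obstacle in this argument: the only step requiring a line of care is the evaluation $\eta(\pm2)=-1$ for odd $m$, which is standard and already invoked earlier in the paper.
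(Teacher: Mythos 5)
Your proof is correct, and it takes a genuinely different (and shorter) route than the paper's. You observe that for $p=5$ and odd $m$ the set $\{x\in\mathbb F_5^*\setminus\{\pm1\}:\eta(x)=1\}$ is empty, since $\mathbb F_5^*\setminus\{\pm1\}=\{2,3\}$ and $\eta(\pm2)=-1$ when $m$ is odd (the same fact the paper uses in the proofs of Corollaries \ref{cor2} and \ref{cor4}); hence the condition of Theorem \ref{thm5} holds vacuously and no property of the quadratic $x^2+2(b+1)x+b$ needs to be examined. The paper instead discards the side condition $\eta(x)=1$ and shows the quadratic has no roots at all: for each $b\in\{1,2,3\}$ its discriminant $4(b^2+b+1)$ is a nonsquare modulo $5$, so the polynomial is irreducible over $\mathbb F_5$, and by Lemma \ref{irreducible} it remains irreducible over $\mathbb F_{5^m}$ for odd $m$. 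The paper's argument thus proves a stronger statement (no roots anywhere in $\mathbb F_{5^m}$) at the cost of the discriminant check and an appeal to Lemma \ref{irreducible}; yours is more economical and pinpoints exactly why $p=5$ is special --- for larger primes $p\equiv 1\pmod 4$ the set of squares in $\mathbb F_p^*\setminus\{\pm1\}$ is nonempty, so the quadratic must genuinely be analyzed, which is why the remark following Corollary \ref{cor7} lists residue classes of $p$ for which optimality fails.
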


\begin{proof} By Theorem \ref{thm5}, we only need to prove that $x^2+2(b+1)x+b=0$ has no solution $x\in\mathbb F_{p}^*\setminus\{\pm 1\}$ such that $\eta(x)=1$ as  $b$ runs through $\mathbb F_{p}^*\backslash\{-1\}$. It is straightforward to check that $x^2+2(b+1)x+b=0$ is irreducible over $\mathbb F_{p}^*\backslash\{-1\}$ for any $b\in \mathbb F_{p}^*\backslash\{-1\}$. According to Lemma \ref{irreducible}, $x^2+2(b+1)x+b=0$ remains irreducible over $\mathbb F_{p^m}^*$ since $m$ is odd. Hence $x^2+2(b+1)x+b=0$ has no solution $x\in\mathbb F_{p}^*\setminus\{\pm 1\}$ for any $b\in \mathbb F_{p}^*\backslash\{-1\}$. This completes the proof.
\end{proof}

\begin{remark}
Let $u=\frac{p^m+1}{2}$ and  $v$ be defined by \eqref{v=-2}, where $p\geq5$ is an odd prime and $m$ is an odd positive integer. According to Theorem \ref{thm5} and Lemma \ref{12}, the code $C_p(0,1,u^{-1}v)$ is not optimal and has parameters $[p^m-1,p^m-2m-2,3]$ if
$p\equiv 13\,({\rm mod}\,24)$ or $p\equiv 29,37,53\,({\rm mod}\,56)$ or
$p\equiv 29,53,65\,({\rm mod}\,84)$ or $p\equiv  37,193,253,277,337,373,$ $ 173,257,269,293, 353,377 ({\rm mod}\,420)$.
\end{remark}

An example of the code of Theorem \ref{thm5} is the following.

\begin{example} Let $p=5$ and $m=3$. Then $u=\frac{5^m+1}{2}=63$ and $v=122$. And then $u^{-1}v=122$. Let $\alpha$ be the generator of $\mathbb F_{5^m}^*$ with $\alpha^3+3\alpha+3=0$. The code $C_5(0,1,u^{-1}v)$ has parameters $[124,117,4]$  and generator polynomial $x^7 + 3x^6 + x^4 + x^3 + 4x^2 + 2x + 3$.
\end{example}

\section{Conclusions}

Let $p\geq 5$ be an odd prime and $u=\frac{p^m+1}{2}$. In coding theory, a fundamental problem is to construct optimal  cyclic codes achieving  a type of bound of linear codes.
Researchers have always been interested in
constructing optimal cyclic codes with respect to the Sphere Packing bound.
By the Sphere Packing bound, $p$-ary cyclic codes with parameters $[p^m-1,p^m-2m-2,4]$ are optimal.
Up to the authors' knowledge, there are only six classes of known optimal $p$-ary cyclic codes with parameters $[p^m-1,p^m-2m-2,4]$ for any odd prime $p\geq 5$.
In order to enrich the coding theory, this paper focuses on constructing optimal $p$-ary cyclic codes with parameters $[p^m-1,p^m-2m-2,4]$. Four classes of optimal $p$-ary cyclic codes $C_p(0,1,u^{-1}v)$ or $C_p(0,1,uv^{-1})$ with parameters $[p^m-1,p^m-2m-2,4]$ were presented by  weakening the necessary and sufficient conditions on  cyclic codes to have codewords of Hamming weight 3 and analyzing the solutions of certain equations over finite fields.
Moreover, three  classes of optimal $p$-ary cyclic codes constructed in this paper are infinite. For the convenience of  readers,
known optimal $p$-ary cyclic codes $C_p(0,1,v)$ with parameters $[p^m-1, p^m-2m-2,4]$   are listed in Table 1 and our optimal $p$-ary cyclic codes $C_p(0,1,u^{-1}v)$ or $C_p(0,1,uv^{-1})$ with parameters $[p^m-1, p^m-2m-2,4]$ are listed in Table 2.


\begin{table}
\centering
\caption{Our cyclic codes $C_p(0,1,u^{-1}v)$ or $C_p(0,1,uv^{-1})$  with parameters $[p^m-1, p^m-2m-2,4]$}
\label{Tab:02}
\begin{tabular}{c c c}
  \hline
\makecell[l]{Values of $v$ or requirements on $v$}  &  \makecell[l]{Conditions} & Reference \\[0.5mm]\hline
\makecell[l]{ $v(p^k+1)\equiv p^h+1\,({\rm mod}\,p^m-1)$} & \makecell[l]{${\rm gcd}(h-k,m)={\rm gcd}(h+k,m)=1$,\\ $p^m\equiv 1 \,({\rm mod}\,4)$ or ${\rm gcd}( v,p^m-1)=1$,\\
\eqref{weight3} has no distinct solutions $x,y$ in $\mathbb F_{p}^*\backslash \{1\}$} & Theorem \ref{thm2} \\[0.4cm]
\makecell[l]{ $v(p^k-1)\equiv p^h-1\,({\rm mod}\,p^m-1)$} &  \makecell[l]{${\rm gcd}(h,m)={\rm gcd}(h-k,m)=1$,\\ $p^m\equiv 1 \,({\rm mod}\,4)$ or ${\rm gcd}( v,p^m-1)=1$,\\
\eqref{weight3} has no distinct solutions $x,y$ in $\mathbb F_{p}^*\backslash \{1\}$} & Theorem \ref{thm3} \\[0.4cm]
\makecell[l]{ $v=\frac{p^m-1}{2}+p^k+1$} &  \makecell[l]{$\frac{m}{{\rm gcd}(k,m)}\equiv 0\,({\rm mod}\,4)$} & Theorem \ref{thm4} \\[0.5mm]
\makecell[l]{ $v=\frac{p^m-1}{2}+2p^k$} &  \makecell[l]{$m\equiv 0\,({\rm mod}\,4)$}&Theorem \ref{thm4} \\[1mm]
\makecell[l]{$v=\frac{p^m-1}{2}+\frac{p^m-3}{2}$} & \makecell[l]{ $m\equiv 0\,({\rm mod}\,2)$}&Theorem \ref{thm4}  \\[0.5mm]
\makecell[l]{$v=\frac{p^m-1}{2}-1$}& \makecell[l]{$p^m\equiv 1 \,({\rm mod}\,4)$ or ${\rm gcd}( v,p^m-1)=1$}&Theorem \ref{thm4}  \\[0.4cm]
\makecell[l]{$v\equiv -2\,({\rm mod}\,p^m-1)$}&\makecell[l]{$p\equiv 1\,({\rm mod}\,4)$, $m$ is odd, $b\in \mathbb F_{p}^*\backslash \{-1\}$,\\ $x^2+2(b+1)x+b=0$ has no solution\\ $x\in\mathbb F_{p}^*\backslash \{\pm1\}$ such that $\eta(x)=1$}&Theorem \ref{thm5}  \\[0.4cm]
\makecell[l]{$v(5^k+1)\equiv 5^h+1\,({\rm mod}\,5^m-1)$} & \makecell[l]{$m$ is odd, $v\equiv 1\,({\rm mod}\,4)$,\\${\rm gcd}(h-k,m)={\rm gcd}(h+k,m)=1$} & Corollary \ref{cor2} \\[0.5mm]
\makecell[l]{$v(5^k+1)\equiv 5^h+1\,({\rm mod}\,5^m-1)$} & \makecell[l]{$m$ is even, $v\equiv 3\,({\rm mod}\,4)$,\\ ${\rm gcd}(h-k,m)={\rm gcd}(h+k,m)=1$} & Corollary \ref{cor2} \\[0.5mm]
\makecell[l]{$v(5^k-1)\equiv 5^h-1\,({\rm mod}\,5^m-1)$} &  \makecell[l]{$m$ is odd, $v\equiv 1,2\,({\rm mod}\,4)$,\\ ${\rm gcd}(h,m)={\rm gcd}(h-k,m)=1$} & Corollary \ref{cor4} \\[0.5mm]
\makecell[l]{$v(7^k-1)\equiv 7^h-1\,({\rm mod}\,7^m-1)$} &  \makecell[l]{$m$ is odd, $v\equiv 5\,({\rm mod}\,6)$, ${\rm gcd}( v,p^m-1)=1$\\ ${\rm gcd}(h,m)={\rm gcd}(h-k,m)=1$} & Corollary \ref{cor5} \\[0.5mm]
\makecell[l]{$v\equiv -2\,({\rm mod}\,5^m-1)$}&\makecell[l]{$m$ is odd}&Corollary \ref{cor7}  \\[0.5mm]
\hline
\end{tabular}
\end{table}

\section{Acknowledgements}{Jinmei Fan was supported by National Natural Science Foundation of China (Grant No. 12061027), Yanhai Zhang was supported by the
Doctoral Research Foundation of Guilin University of Technology (Grant No. GUTQDJJ2018033) and by the Opening Fund of Key Laboratory of Cognitive Radio and Information Processing, Ministry of Education (Grant No. CRKL210206).}





\begin{thebibliography}{99}

\bibitem{MacWilliams} MacWilliams F, Sloane N. The theory of error-correcting codes. Amsterdam-New York-Oxford: North-Holland Publishing Company, 1977

\bibitem{Chien} Chien R.
Cyclic decoding procedure for the Bose-Chaudhuri-Hocquenghem codes.
IEEE Trans Inf Theory, 1964, 10(4): 357--363



\bibitem{Forney}
 Forney G D. On decoding BCH codes.
 IEEE Trans Inf Theory, 1965, 11(4): 549--557

\bibitem{Prange}
 Prange E. Some Cyclic Error-Correcting Codes with Simple Decoding Algorithms,
 AFCRC-TN-58-156. Cambridge: Mass, 1985

\bibitem{DH}
 Ding C S, Helleseth T.
Optimal ternary cyclic codes from monomials.
 IEEE Trans Inf Theory, 2013, 59(9): 5898--5904



\bibitem{CDY}
 Carlet C, Ding C S, Yuan J, et al.
Linear codes from highly nonlinear functions and their secret sharing schemes.
 IEEE Trans Inf Theory, 2005, 51(6): 2089--2102


\bibitem{LLHDT} Li N, Li C L, Helleseth T, et al. Optimal ternary cyclic codes with minimun distance
four and five. Finite Fields Appl, 2014, 30: 100--120

\bibitem{LLHD}
 Li C L, Li N, Helleseth T,  et al.
The weight distributions of several classes of cyclic codes from APN monomials.
 IEEE Trans Inf Theory, 2014, 60(8): 4710--4721


\bibitem{open1}Li N, Zhou Z C, Helleseth T. On a conjecture about a class of optimal ternary cyclic codes. In 2015 Seventh International Workshop on Signal Design and its Applications in Communications (IWSDA), 2015,  10: 62--65


 \bibitem{FLZ}
 Fan C L, Li N, Zhou Z C.
A class of optimal ternary cyclic codes and their duals.
Finite Fields Appl, 2016, 37: 193--202

\bibitem{WW}
Wang L S, Wu G F.
Several classes of optimal ternary cyclic codes with minimal distance four. Finite Fields Appl, 2016, 40: 126--137

\bibitem{YZD}
 Yan H D, Zhou Z C,  Du X N.
A family of optimal ternary cyclic codes from the Niho-type exponent.
Finite Fields Appl, 2018, 54: 101--112

\bibitem{open2} Han D C, Yan H D. On an open problem about a class of optimal ternary cyclic codes. Finite Fields
Appl, 2019, 59: 335--343

\bibitem{Zha1}
Zha Z B,  Hu L.
New classes of optimal ternary cyclic codes with minimum distance four. Finite Fields Appl, 2020, 64: 101671

\bibitem{open3} Liu Y, Cao X W, Lu W. On some conjectures about optimal ternary cyclic codes. Des Codes Cryptogr, 2020, 88(2): 297--309


\bibitem{Zha2}
 Zha Z B, Hu L, Liu Y, et al.
Further results on optimal ternary cyclic codes.
Finite Fields Appl, 2021, 75: 101898

\bibitem{LCL}
Liu Y, Cao X W, Lu W.
 Two classes of new optimal ternary cyclic codes.
Adv  Math Commun, 2023,
17(4): 979--993

\bibitem{ZLS}
 Zhao H, Luo R, Sun T J.
 Two families of optimal ternary cyclic codes with minimal distance four.
 Finite Fields Appl, 2022, 79: 101995

\bibitem{XCX}
 Xu G K, Cao X W, Xu S D.
Optimal $p$-ary cyclic codes with minimum distance four from monomials.
 Cryptogr Commun,  2016, 8: 541--554



\bibitem{Fan}
 Fan J M, Zhang Y H.
Optimal quinary cyclic codes with minimum distance four.
Chinese J Electron, 2020, 29(3): 515--524

\bibitem{Liu1}
Liu Y,  Cao X W.
Four classes of optimal quinary cyclic codes.
 IEEE Commun Lett, 2020, 24(7): 1387--1390


\bibitem{WLZ}
Wu G F, Liu H, Zhang Y Q.
Several classes of optimal $p$-ary cyclic codes with minimal distance four.
Finite Fields Appl, 2023, 92: 102275

\bibitem{YCD} Yuan J, Carlet C,  Ding C S, et al. The weight distribution of a class
of linear codes from perfect nonlinear functions. IEEE Trans Inf Theory, 2006, 52(2): 712--717

\bibitem{Gong}
 Golomb S W, Gong G.
Signal design for good correlation: for wireless communication, cryptography, and radar.  New York:
 Cambridge University Press, 2005


\bibitem{FXXZ}	Fan J M, Xu Y G, Xia Y B, et al. Two families of Niho sequences having four-valued cross correlation with $m$-sequences. Sci China Math, 2017, 60(12): 2377--2390

\bibitem{Lidl}
 Lidl R, Niederreiter H.
 Finite Fields. Cambridge:
 Cambridge University Press, 1997

\bibitem{Ireland}  Ireland K, Rosen M. A Classical Introduction to Modern Number Theory. New York: Springer, 1990




























\end{thebibliography}
\end{document}